\RequirePackage[l2tabu, orthodox]{nag}
\documentclass[10pt]{article}

\usepackage[T1]{fontenc}
\usepackage[utf8]{inputenc}
\usepackage{lmodern}
\usepackage[authoryear]{natbib}

\usepackage{microtype}
\DeclareFontSeriesDefault{bf}{bx}          
\usepackage[frenchmath]{mathastext}		   
\usepackage{amsmath}                       
\numberwithin{equation}{section}
\usepackage{amsfonts}
\usepackage{graphicx}                      
\usepackage[plainpages=false, pdfpagelabels]{hyperref} 
	\hypersetup{
		colorlinks   = true,
		citecolor    = RoyalBlue, 
		linkcolor    = RubineRed, 
		urlcolor     = Turquoise
	}
\usepackage[dvipsnames]{xcolor}
\usepackage[all]{xy}
\usepackage{amssymb}                        
\usepackage[mathscr]{eucal}                 
\usepackage{dsfont}							
\usepackage[paperwidth=8.5in,paperheight=11.00in,top=1.25in, bottom=1.25in, left=1.00in, right=1.00in]{geometry}
\usepackage{mathtools}                      
\mathtoolsset{showonlyrefs=true}            
\usepackage{microtype}		                
\usepackage{bbm}                            
\usepackage{physics}                        
\usepackage{enumerate}                      
\usepackage{subcaption}                     
\usepackage{booktabs}                       
\usepackage{amsthm}                         
\allowdisplaybreaks                         

\theoremstyle{plain}
\newtheorem{theorem}{Theorem}
\numberwithin{theorem}{section}
\newtheorem{proposition}[theorem]{Proposition}
\newtheorem{corollary}[theorem]{Corollary}
\theoremstyle{definition}

\newtheorem{remark}[theorem]{Remark}

\newcommand{\keywords}[1]{%
  \vspace{1em}
  \noindent\textbf{Keywords: }#1
}

\newcommand{\pd}{\partial}
\newcommand{\one}{\mathbbm{1}}

\newcommand{\E}{\mathbb{E}}
\newcommand{\V}{\mathbb{V}}
\renewcommand{\S}{\mathbb{S}}
\newcommand{\K}{\mathbb{K}}

\newcommand{\Q}{\mathbb{Q}}

\newcommand{\R}{\mathbb{R}}

\newcommand{\cN}{\mathcal{N}}

\newcommand{\cL}{\mathcal{L}}
\newcommand{\cF}{\mathcal{F}}

\newcommand{\cA}{\mathcal{A}}

\newcommand{\cV}{\mathcal{V}}

\newcommand{\kD}{\mathfrak{D}}
\newcommand{\kK}{\mathfrak{K}}
\newcommand{\kT}{\mathfrak{T}}

\newcommand{\what}{\widehat}
\newcommand{\wtilde}{\widetilde}
\newcommand{\womega}{\what{\omega}}

\renewcommand{\(}{\left(}
\renewcommand{\)}{\right)}
\renewcommand{\[}{\left[}
\renewcommand{\]}{\right]}

\author{
    Jimin Lin\thanks{Quantitative Research, Bloomberg. (E-mail: \href{mailto:jlin846@bloomberg.net}{jlin846@bloomberg.net}).}
}

\begin{document}
\title{Shallow Representation of Option Implied Information}
\date{}
\maketitle

\begin{abstract}
Option prices encode the market's collective outlook through implied density and implied volatility. An explicit link between implied density and implied volatility translates the risk-neutrality of the former into conditions on the latter to rule out static arbitrage. Despite earlier recognition of their parity, the two had been studied in isolation for decades until the recent demand in implied volatility modeling rejuvenated such parity. This paper provides a systematic approach to build neural representations of option implied information. As a preliminary, we first revisit the explicit link between implied density and implied volatility through an alternative and minimalist lens, where implied volatility is viewed not as volatility but as a pointwise corrector mapping the Black-Scholes quasi-density into the implied risk-neutral density. Building on this perspective, we propose the neural representation that incorporates arbitrage constraints through the differentiable corrector. With an additive logistic model as the synthetic benchmark, extensive experiments reveal that deeper or wider network structures do not necessarily improve the model performance due to the nonlinearity of both arbitrage constraints and neural derivatives. By contrast, a shallow feedforward network with a single hidden layer and a specific activation effectively approximates implied density and implied volatility.

\keywords{implied volatility; implied risk-neutral density; machine learning; shallow neural network}
\end{abstract}

\section{Introduction}

Option prices implicitly contain the consensus of the market prospect on the underlying asset, and constructing a continuous representation of such implied information, mainly the \emph{implied density}, also known as the risk-neutral density, and the \emph{implied volatility (surface)}, from discrete observations of vanilla option prices has attracted interest for a long time. Implied density can be recovered by the second partial derivative of the option price \citep{breeden1978prices}. Implied volatility arose from the need to amend the option price produced by the Black-Scholes formula \citep{black1973pricing} to match the market price, and quickly became a central concept—along with sources of misconception—in modern option pricing theory. Implied density can be expressed as an explicit function of implied volatility, which essentially translates risk-neutrality into arbitrage conditions on implied volatility. We provide a version, Equation~\eqref{eq:psi_omega} in Proposition~\ref{prop:psi_omega}, which regards implied volatility as a numerical corrector that pointwise morphs the Black-Scholes quasi-density into implied density. We henceforth use the (implied-)density-(implied-)volatility parity as a pivotal idea to facilitate our discussion on how it relates to implied volatility modeling with modern machine learning facilities.

Curiously, the explicit density-volatility parity seems to be missing from the literature for a long period of time, and is then repeatedly rediscovered for different purposes in the early 2000s. \citep{jackwerth2000recovering} derived the parity to recover a risk aversion function in an economic context. \citep{brunner2003arbitrage} used the parity in the context of arbitrage analysis of implied volatility smiles. Part of the parity is also provided by \citep{durrleman2010implied} as the butterfly arbitrage condition, which might be traced back to an unpublished manuscript, as referenced in \citep{roper2009implied, roper2010arbitrage}. \citep{tavin2012implied} decomposed the formula as a lognormal density plus two adjustment terms to derive arbitrage conditions. Contextual segregation and simple derivation might cause the parity to be regarded as a trivial result, so this parity has never been given a formal name.

Sporadic attention to this parity reflects diverging interests among option researchers. The focus on implied density is likely driven by economic interest, where the risk premium is studied by contrasting empirical density and implied density without particular treatment on implied volatility. This interest is later absorbed by the notion of pricing kernel, see \citep{figlewski2018risk} for a review. The attention on implied volatility comes from two directions: option pricing modeling and implied volatility modeling. Modern option pricing modelers often use implied volatility as both a starting point and an end point to design and test their stochastic processes, such as \citep{bergomi2012stochastic, bayer2016pricing, jaber2022quintic}, where, in contrast to earlier generations of models like \citep{cox1976valuation, merton1976option, dupire1994pricing, heston1993closed, bates1996jumps}, attention to implied density is largely bypassed. On the other hand, attempts to directly model implied volatility will eventually return to the density-volatility parity for arbitrage conditions.

Early parameterizations of implied volatility surface, such as the stochastic volatility inspired (SVI) model \citep{gatheral2004parsimonious, gatheral2006volatility} and the stochastic alpha, beta, rho (SABR) model \citep{avellaneda2005sabr} did not enforce the density-volatility parity and were found to violate arbitrage constraints by \citep{roper2010arbitrage}. \citep{tavin2012implied} did not catch the violation in the SVI model. \citep{benko2007extracting} noticed the parity from \citep{brunner2003arbitrage} and employed it as a constraint for local polynomial smoothing of implied volatility. \citep{fengler2009arbitrage, andreasen2011volatility, glaser2012arbitrage} more or less acknowledged the parity, but did not incorporate it into their models directly. \citep{gatheral2014arbitrage} utilized the parity from \citep{roper2009implied} to modify the SVI model to rule out arbitrage, resulting in a class of models named surface SVI (SSVI). The parity dissolved into a specific condition for a SSVI parameter \citep[Theorem 4.2]{gatheral2014arbitrage}, and then became implicit in succeeding extended SSVI models \citep{hendriks2019extended, corbetta2019robust, mingone2022no}.

Closely following the implied volatility parameterization attempts were the usage of neural networks to represent implied volatility, where it is natural to adapt the parity and other arbitrage conditions expressed in terms of implied volatility, collected by \citep{roper2010arbitrage}, into loss functions. Deep smoothing came along this route first, named after \citep{ackerer2020deep}, which implemented feedforward neural networks to interpolate implied volatility surfaces. \citep{zheng2021incorporating} applied ensemble methods to build implied volatility as a weighted average from multiple neural networks. \citep{wiedemann2024operator} adopted a graph neural operator with smoothness regularization to jointly learn historical implied volatility surfaces, where multiple feedforward networks are used to approximate lifting, kernel convolution, and projection operations. \citep{yang2025hyperiv} chose a hypernetwork structure that used a transformer to encode the weight of another feedforward network that models implied volatility. There are generative approaches as well, such as \citep{bergeron2022variational, vuletic2024volgan} that use arbitrage conditions to screen out wrongly generated implied volatility surfaces.

Compared to the previous two decades' careful progression of parameterization approaches, the recent exploration of machine learning approaches seems radical. Although trending neural network structures are being tested, some basic questions of interest are still under discussion: Although there is a theoretical guarantee that a twice differentiable feedforward network can approximate implied volatility and implied density well \citep{hornik1989multilayer, hornik1990universal, hornik1991approximation}, yet how deep and wide a network we should use, and what activations are appropriate? Different underlying assets might have different scales. What is the optimal way to standardize the data? The market evolves and regimes change, and different option pricing models capture various market aspects, so what is a reasonable benchmark-and a potential prior?

In contrast to the aforementioned approaches that experiment with advanced neural networks, this paper is dedicated to building the simplest shallow neural representation of option implied information. We aim to provide a systematic treatment—from data processing, benchmark selection, and network design to experiment design and results analysis. The rest of the paper is structured as follows. In Section~\ref{sec:option}, we invite readers to review the option pricing theory without stochastic calculus through the lens of a minimalist. There are familiar changes of variables, whose economic or statistical meanings are often unappreciated, which naturally lead to expression simplification, dimension reduction, and data standardization. We also introduce the additive logistic pricing model \citep{carr2021additive, azzone2023fast, azzone2025explicit, lin2024neural} as the benchmark model. Section~\ref{sec:volatility} begins by clarifying why implied volatility is not volatility, exemplified by a family of risk-neutral logistic-beta distributions of a fixed volatility that produces rich shapes of implied volatility curves. Treating implied volatility as a pointwise density corrector intuitively leads to Proposition~\ref{prop:psi_omega}, which not only translates risk-neutrality into arbitrage conditions, but also justifies the usage of a feedforward neural network to approximate the corrector. Section~\ref{sec:neural} establishes the neural representation framework and shows that deep smoothing actually does not require the use of deep neural networks. Through comprehensive experiments and diagnosis, we identify a modeling risk that goodness of fit to implied volatility does not guarantee a correct implied density. We find that upscaling neural networks does not necessarily improve performance of neural representations due to the nonlinearity of arbitrage conditions and neural derivatives. Finally, we show by experiment that neural representation can be built with a feedforward network with only one hidden layer.

\section{Option Pricing Simplified} \label{sec:option}
For simplicity, we consider assets with positive prices. For treatment of negative prices, see \citep{choi2022black}. Let $S_t$ denote the price of the underlying asset at $t$. We assume the instantaneous risk-free interest rate $r_t$ and the continuous dividend rate $d_t$ are deterministic and positive. The forward price of the asset from $t$ to some future time $T$ is given by $F_{t,T} = S_t e^{\int_t^T (r_u - d_u) \dd u}$. A European vanilla put (resp. call) option with expiry $T$ and strike price $K$ offers the right to sell (resp. buy) the asset with price $K$ at time $T$, which gives a payoff of $(K - S_T)^+$ (resp. $(S_T - K)^+$). 

The market price of vanilla European put option, $P_t(T, K)$, and call option, $C_t(T, K)$, are given by the discounted expected payoff under the market risk-neutral measure $\Q$:
\begin{align} \label{eq:EPC}
P_t(T, K) = e^{-\int_t^T r_u \dd u}\E^\Q \[(K - S_T)^+ | S_t\],
&&
C_t(T, K) = e^{-\int_t^T r_u \dd u}\E^\Q \[(S_T - K)^+ | S_t\].
\end{align}

\subsubsection*{Innate dimension reduction and data standardization}
The following changes of variables translate the above prices expressed in absolute calendar time and dollar values into relative and dimensionless quantities. Define the tenor $\tau:= T - t$ and set $F_\tau := F_{t, T}$. Define
\begin{align} \label{eq:xk}
X_\tau := \log \frac{S_T}{F_\tau},
&&
\kappa := \log \frac{K}{F_\tau},
\end{align}
where $X_0 = 0$. They have intuitive economic interpretations. $X_\tau$ is the \emph{dividend adjusted logarithmic excess return} (\emph{return} for short). $\kappa$ is the \emph{required excess return to hit the strike}, which is commonly called the log-forward-moneyness (\emph{moneyness} for short).

Options quoted by expiry and strike are readily expressed in tenor and moneyness by $(T, K) = (\tau + t, F_\tau e^\kappa)$. Dividing Equation~\eqref{eq:EPC} by dividend adjusted spot price $S_t e^{-\int_t^\tau d_u \dd u}$, we define
\begin{align} \label{eq:Epc}
p(\tau, \kappa):= \frac{P_t(\tau + t, F_\tau e^\kappa)}{S_t e^{-\int_t^\tau d_u \dd u}} = \E^\Q \[\(e^\kappa - e^{X_\tau}\)^+\],
&&
c(\tau, \kappa):= \frac{C_t(\tau + t, F_\tau e^\kappa)}{S_t e^{-\int_t^\tau d_u \dd u}} = \E^\Q \[\(e^{X_\tau} - e^\kappa\)^+\].
\end{align}
$p$ is the \emph{put-to-(dividend-adjusted-)spot price ratio} and $c$ is the \emph{call-to-(dividend-adjusted-)spot price ratio}. For convenience, we still refer to them by put price and call price.

This transformation abstracts away $r$, $d$, and $S_t$ and aligns $X_\tau$, $\kappa$, $p$, and $c$ on the same scale. Thus, it yields inherent dimension reduction and data standardization.

\subsubsection*{Generic form of pricing formula}
Let $\psi(\tau, \cdot)$ and $\Psi(\tau, \cdot)$ denote the probability density function (PDF) and cumulative distribution function (CDF, or distribution) of $X_\tau$ under the $\Q$-measure. Note that $X$ is an exponential $\Q$-martingale, so it holds that $\E^\Q[e^{X_\tau}] = \int e^x \psi(\tau, x) \dd x = 1$, which implies that $\wtilde{\psi}(\tau, x) := e^x \psi(\tau, x)$ is already an Esscher exponential tilting of $\psi$. In other words, $\wtilde{\psi}$ is also a PDF. Let $\wtilde{\Psi}$ be the corresponding CDF of $\wtilde{\psi}$. Also define the complement of the CDF by $\overline{\Psi} = 1 -{\Psi}$. Then Equation~\eqref{eq:Epc} can be simplified to the following form:
\begin{align} \label{eq:pc}
p(\tau, \kappa) = e^\kappa \Psi(\tau, \kappa) - \wtilde{\Psi}(\tau, \kappa),
&&
c(\tau, \kappa) = \overline{\wtilde{\Psi}}(\tau, \kappa) - e^\kappa \overline{\Psi}(\tau, \kappa).
\end{align}
Equation~\eqref{eq:pc} explicates the relation between option price and the implied CDF, and also indicates that the essence of an option pricing model is to specify the risk-neutral CDF.

\subsubsection*{Sufficient conditions on price to exclude static arbitrage}
\cite{carr2005note} lists sufficient conditions on option prices to exclude static arbitrages. Here we express them in terms of the put option:
\begin{enumerate}[i.]
    \item Calendar spread: for any $\Delta \tau > 0$, $p(\tau + \Delta \tau, \kappa) \ge p(\tau, \kappa)$ due to convexity of vanilla payoff;
    \item Vertical spread: for any $\Delta \kappa > 0$, $p(\tau, \kappa + \Delta \kappa) - p(\tau, \kappa) \ge 0$ since the first payoff dominates the second;
    \item Butterfly spread: $p(\tau, \kappa + \Delta \kappa) - 2 p(\tau, \kappa) + p(\tau, \kappa - \Delta \kappa) \ge 0$ as the portfolio has possible positive payoff.
\end{enumerate}
These conditions can be expressed with partial derivatives as follows:
\begin{align} \label{eq:con_arb_p}
\pd_\tau p(\tau, \kappa) \ge 0,
&&
\pd_\kappa p(\tau, \kappa) \ge 0,
&&
\pd_{\kappa \kappa} p(\tau, \kappa) \ge 0.
\end{align}
Calculating these partial derivatives from Equation~\eqref{eq:pc}, we obtain
\begin{equation} \label{eq:pd_p}
\begin{aligned}
\pd_\tau p(\tau, \kappa)
    &= e^\kappa \pd_\tau \Psi(\tau, \kappa) - \pd_\tau \wtilde{\Psi}(\tau, \kappa),
\\
\pd_\kappa p(\tau, \kappa)
    &= e^\kappa \Psi(\tau, \kappa),
\\
\pd_{\kappa \kappa} p(\tau, \kappa)
    &= e^\kappa \Psi(\tau, \kappa) + e^\kappa \psi(\tau, \kappa).
\end{aligned}
\end{equation}
Notice that the positive vertical spread condition $\pd_\kappa p \ge 0$ and positive butterfly spread condition $\pd_{\kappa \kappa} p \ge 0$ inherit from the risk-neutrality of the distribution. Only the last positive calendar spread condition $\pd_\tau p \ge 0$ requires a specific term structure of the distribution.

\subsubsection*{Black-Scholes model}
 In the Black-Scholes model, the marginal distribution of the underlying return follows a risk-neutral normal distribution, whose variance increases linearly with tenor and whose mean equals negative one-half the variance, $X_\tau \sim N(- \frac{1}{2} \sigma^2 \tau, \sigma^2 \tau)$. The volatility $\sigma > 0$ is the only free parameter. Defining the total volatility $\omega$ absorbs the term structure and simplifies expressions:
\begin{equation} \label{eq:omega_sigma_const}
\omega(\tau):= \sigma \sqrt{\tau}
\end{equation}
The corresponding Black-Scholes PDF, CDF, tilted PDF, and tilted CDF are
\begin{align} \label{eq:pdf_n}
\psi_{BS}(\tau, x)
    &= \frac{1}{\omega(\tau)} \phi\big(z_+(\tau, x)\big),
&
\wtilde{\psi}_{BS}(\tau, x)
    &= \frac{1}{\omega(\tau)} \phi\big(z_-(\tau, x)\big),
\\
\Psi_{BS}(\tau, x)
    &= \Phi \big( z_+(\tau, x) \big),
&
\wtilde{\Psi}_{BS}(\tau, x)
    &= \Phi \big( z_-(\tau, x) \big),
\end{align}
where $\phi$ and $\Phi$ are respectively the PDF and CDF of the standard normal distribution $N(0,1)$ and 
\begin{equation} \label{eq:z_n}
z_\pm(\tau, x) := \frac{x \pm \frac{1}{2} \omega^2(\tau) }{\omega(\tau)} 
\end{equation}
are statistical pivots of $N(\mp \frac{1}{2} \omega^2(\tau), \omega^2(\tau))$. Here $\omega^2(\tau) = (\omega(\tau))^2$ is a shorthand.

The Black-Scholes pricing formula is readily obtained by plugging Equation~\eqref{eq:pdf_n} into Equation~\eqref{eq:Epc} and using the symmetry of normal distribution:
\begin{align} \label{eq:pc_n}
p_{BS}(\tau, \kappa)
    &= e^\kappa \Psi_{BS}(\tau, \kappa) - \wtilde{\Psi}_{BS}(\tau, \kappa)
    = e^\kappa \Phi\big(z_+(\tau, \kappa)\big) - \Phi\big(z_-(\tau, \kappa)\big),
\\
c_{BS}(\tau, \kappa)
    &= \overline{\wtilde{\Psi}}_{BS}(\tau, \kappa) - e^\kappa \overline{\Psi}_{BS}(\tau, \kappa)
    =\Phi \big( -z_-(\tau, \kappa) \big) - e^\kappa \Phi \big( -z_+(\tau, \kappa) \big).
\end{align}
Note that the pivots $z_-$ and $z_+$ are rewritten forms of classical Black-Scholes terms $-d_1$ and $-d_2$ to make their financial and statistical interpretations explicit: $z_+$ (resp. $z_-$) is the location-scale standardized return under the Black-Scholes PDF (resp. tilted PDF).

\subsubsection*{Additive logistic model as the benchmark model}
We will use the additive logistic model as our benchmark model, which was introduced by \citep{carr2021additive} and extensively studied by \citep{azzone2023fast, azzone2025explicit, lin2024neural}. Distinguished from models built bottom-up from stochastic differential equations, the additive logistic model provides direct top-down access to the risk-neutral density of the underlying asset return. For every tenor, three parameters can be specified to control the dispersion, right tail, and left tail of the marginal distribution, which is logistic-beta \citep{balakrishnan1991handbook}. The term structure is customized by the choice of interpolating every parameter into a continuous function of tenor, and by such, the underlying martingale process is easily constructed. It has an explicit pricing formula and is flexible to fit the real data in usual market conditions. Those features make the additive logistic model an ideal benchmark for new model development.

In the additive logistic model, every marginal of the return follows a risk-neutral logistic-beta distribution, $X_\tau \sim LB(\mu(\tau), \varsigma(\tau), \alpha(\tau), \beta(\tau))$. The location parameter is fixed due to martingality as 
\begin{align} \label{eq:lb_mu}
\mu(\tau) = \log\left(\frac{B(\alpha(\tau) + \varsigma(\tau), \beta(\tau) - \varsigma(\tau))}{B(\alpha(\tau), \beta(\tau))}\right),
\end{align}
where $B$ is the beta function. The three free parameters $\varsigma$, $\alpha$, and $\beta$ are functions of $\tau$ that respectively determine the term structure of dispersion, right tail, and left tail. The PDF $\phi_{LB}$ and CDF $\Phi_{LB}$ of standard logistic-beta distribution $LB(0, 1, \alpha, \beta)$ are given by
\begin{equation} \label{eq:pdf_lb}
\begin{aligned}
\phi_{LB}(z; \alpha, \beta)
    &= \frac{1}{B(\alpha, \beta)} \phi_L(z) (\Phi_L(z))^{\alpha - 1} (1 - \Phi_L(z))^{\beta - 1},
\\
\Phi_{LB}(z; \alpha, \beta)
    &= \int_0^{\Phi_L (z)} u^{\alpha - 1} (1-u)^{\beta - 1} \dd u,
\end{aligned}
\end{equation}
where $\phi_L$ and $\Phi_L$ are the PDF and CDF of standard logistic distribution $L(0,1)$:
\begin{align} \label{eq:pdf_l}
\phi_L(z) = \frac{e^z}{(1 + e^z)^2},
&&
\Phi_L(z) = \frac{e^z}{1 + e^z}.
\end{align}
Define $z(\tau, x):= \frac{x - \mu(\tau)}{\varsigma(\tau)}$ as the pivot. The risk-neutral PDF, tilted PDF, CDF, and tilted CDF of $X_\tau$ are
\begin{align}
\psi_{LB}(\tau, x)
    &= \frac{1}{\varsigma(\tau)} \phi_{LB}\big(z(\tau, x); \alpha(\tau), \beta(\tau)\big),
&
\wtilde{\psi}_{LB}(\tau, x)
    &= \frac{1}{\varsigma(\tau)} \phi_{LB}\big(z(\tau, x); \alpha(\tau) + \varsigma(\tau), \beta(\tau) - \varsigma(\tau)\big),
\\
\Psi_{LB}(\tau, x)
    &= \Phi_{LB}\big( z(\tau, x); \alpha(\tau), \beta(\tau) \big),
&
\wtilde{\Psi}_{LB}(\tau, x)
    &= \Phi_{LB}\big( z(\tau, x); \alpha(\tau) + \varsigma(\tau), \beta(\tau) - \varsigma(\tau) \big),
\end{align}
Plugging Equation~\eqref{eq:pdf_lb} into Equation~\eqref{eq:Epc} and using the symmetry of the logistic-beta distribution yields the option pricing formula:
\begin{equation} \label{eq:pc_lb}
\begin{aligned} 
p_{LB}(\tau, \kappa)
    &= e^\kappa \Psi_{LB}(\tau, \kappa) - \wtilde{\Psi}_{LB}(\tau, \kappa) \\
    &= e^\kappa \Phi_{LB}\big(z(\tau, \kappa); \alpha(\tau), \beta(\tau) \big) - \Phi_{LB}\big(z(\tau, \kappa); \alpha(\tau) + \varsigma(\tau), \beta(\tau) - \varsigma(\tau)\big),
\\
c_{LB}(\tau, \kappa)
    &= \overline{\wtilde{\Psi}}_{LB}(\tau, \kappa) - e^\kappa \overline{\Psi}_{LB}(\tau, \kappa) \\
    &= \Phi_{LB}\big(-z(\tau, \kappa); \beta(\tau) - \varsigma(\tau), \alpha(\tau) + \varsigma(\tau)\big) - e^\kappa \Phi_{LB}\big(-z(\tau, \kappa); \beta(\tau), \alpha(\tau) \big).
\end{aligned}
\end{equation}
To ensure valid densities and to exclude static arbitrage, the following conditions on the term structures should hold \citep[Proposition 4.1]{carr2021additive}: (i) $\varsigma < \beta$; (ii) $\alpha, \beta > 0$; (iii) $\varsigma$ is nondecreasing and $\varsigma(0) = 0$; and (iv) $\frac{\alpha}{\varsigma}$ and $\frac{\beta}{\varsigma}$ are nonincreasing.

For $X \sim LB(\mu, \varsigma, \alpha, \beta)$, the first four central moments, expectation, variance, skewness, and (excess) kurtosis, are given by \citep{balakrishnan1991handbook}:
\begin{equation}\label{eq:lb_moment}
\begin{aligned} 
\E[X] &= \varsigma(\gamma(\alpha) - \gamma(\beta)) + \mu, 
&&&
\V[X] &= \varsigma^2(\gamma'(\alpha) + \gamma'(\beta)), 
\\
\S[X] &= \frac{\gamma''(\alpha) - \gamma''(\beta)}{(\gamma(\alpha) +\gamma(\beta))^{3/2}}, 
&&&
\K[X] &= \frac{\gamma'''(\alpha) + \gamma'''(\beta)}{(\gamma(\alpha) + \gamma(\beta))^2},
\end{aligned}
\end{equation}
where $\gamma^{(n)}$ are the polygamma functions of order $n$.

\section{Implied Volatility Reframed} \label{sec:volatility}
Implied volatility is understood at a practical level as the quantity that matches the option price produced by the Black-Scholes formula~\eqref{eq:pc_n} to the market option price. It changed from a constant to a function of tenors and moneynesses after the 1987 Black Monday crash. That is, to reproduce the same option price from the market, one has to amend pointwise the parameter $\sigma \in \R_+$ with respect to both tenor and moneyness, into $\sigma: \R_+ \times \R \to \R_+$, $(\tau, \kappa) \mapsto \sigma(\tau, \kappa)$. Equations~\eqref{eq:omega_sigma_const}-\eqref{eq:pc_n} have to be modified accordingly. The total implied volatility becomes
\begin{align} \label{eq:omega_sigma_iv}
\omega(\tau, \kappa) := \sigma(\tau, \kappa) \sqrt{\tau},
\end{align}
and the new pivots are
\begin{align} \label{eq:z_n_iv}
z_\pm^\omega(\tau, x; \kappa) := \frac{x \pm \frac{1}{2}\omega^2(\tau, \kappa)}{\omega(\tau, \kappa)},
&&
z_\pm^\omega(\tau, x) := z_\pm^\omega(\tau, x; x).
\end{align}
The pricing formula is reformulated with $\omega$ as
\begin{align} \label{eq:pc_n_iv}
p_{BS}^\omega(\tau, \kappa) = e^\kappa \Phi\big(z_+^\omega(\tau, \kappa)\big) - \Phi\big(z_-^\omega(\tau, \kappa)\big),
&&
c_{BS}^\omega(\tau, \kappa) = \Phi\big(-z_-^\omega(\tau, \kappa)\big) - e^\kappa \Phi\big(-z_+^\omega(\tau, \kappa)\big).
\end{align}
The modified Black-Scholes PDF and CDF are given by
\begin{equation} \label{eq:pdf_n_iv}
\begin{aligned}
\psi_{BS}^\omega(\tau, x; \kappa) 
    &= \frac{1}{\omega(\tau, \kappa)} \phi\big(z_+^\omega(\tau, x; \kappa)\big),
&&&
\psi_{BS}^\omega(\tau, x)
    &= \psi_{BS}^\omega(\tau, x; x),
\\
\Psi_{BS}^\omega(\tau, x; \kappa)
    &= \Phi \big( z_+^\omega(\tau, x; \kappa) \big),
&&&
\Psi_{BS}^\omega(\tau, x)
    &= \Psi_{BS}^\omega(\tau, x; x).
\end{aligned}
\end{equation}
The tilted PDF $\wtilde{\psi}_{BS}^\omega$ and CDF $\wtilde{\Psi}_{BS}^\omega$ are modified in the same way. Noticing that $\psi_{BS}^\omega(\tau, x)$ (resp. $\Psi_{BS}^\omega(\tau, x)$) without a fixed $\kappa$ is not a valid PDF (resp. CDF), we call it the Black-Scholes quasi-PDF (resp. quasi-CDF).

We also insert the Black-Scholes vega $v_{BS}^\omega$ here: 
\begin{equation} \label{eq:vega}
v_{BS}^\omega(\tau, \kappa) := \pd_\sigma p_{BS}^\omega(\tau, \kappa) = \pd_\sigma c_{BS}^\omega(\tau, \kappa) = \phi\big(z_-^\omega(\tau, \kappa)\big) \sqrt{\tau}.
\end{equation}
It is useful in calibration to convert price differences into volatility differences, motivated by the first order Taylor expansion $\sigma_1 - \sigma_0 = (\pd_\sigma p)^{-1} (p_1 - p_0) + O(p_1 - p_0)^2$.

\subsubsection*{Volatility reductionism}
Modern option study relies heavily on the notion of volatility to both form intuition and describe underlying dynamics, such as local volatility, stochastic volatility, local stochastic volatility, and rough volatility. However, from the lens of marginal distribution, the term volatility is unexpressive on the tail behavior that is more relevant to the documented smile shapes of implied volatility. The contrast between the risk-neutral logistic-beta distribution and the risk-neutral normal distribution provides a clear example.
\begin{figure}[ht]
     \centering
        \includegraphics[width=\textwidth]{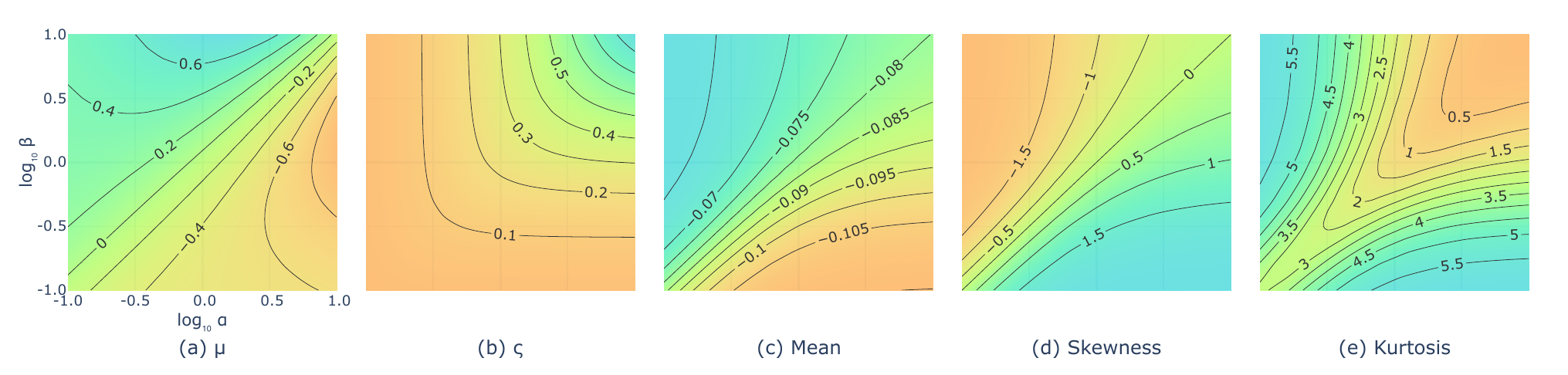}
    \caption{Contour plot of parameters and moments generated by $X \sim LB(\mu, \varsigma, \alpha, \beta)$ with fixed variance $\V[X] = 0.16$. $(\alpha, \beta) \in [10^{-1}, 10^{1}]^2$. $\varsigma$ in (b) is determined by Equation~\eqref{eq:lb_varsigma}, $\mu$ in (b) is determined by Equation~\eqref{eq:lb_mu}, and $\E[X]$, $\S[X]$, and $\K[X]$ in (c)-(e) are determined by Equation~\eqref{eq:lb_moment}.}
    \label{fig:moment_countor}
\end{figure}

\begin{figure}[ht]
     \centering
        \includegraphics[width=\textwidth]{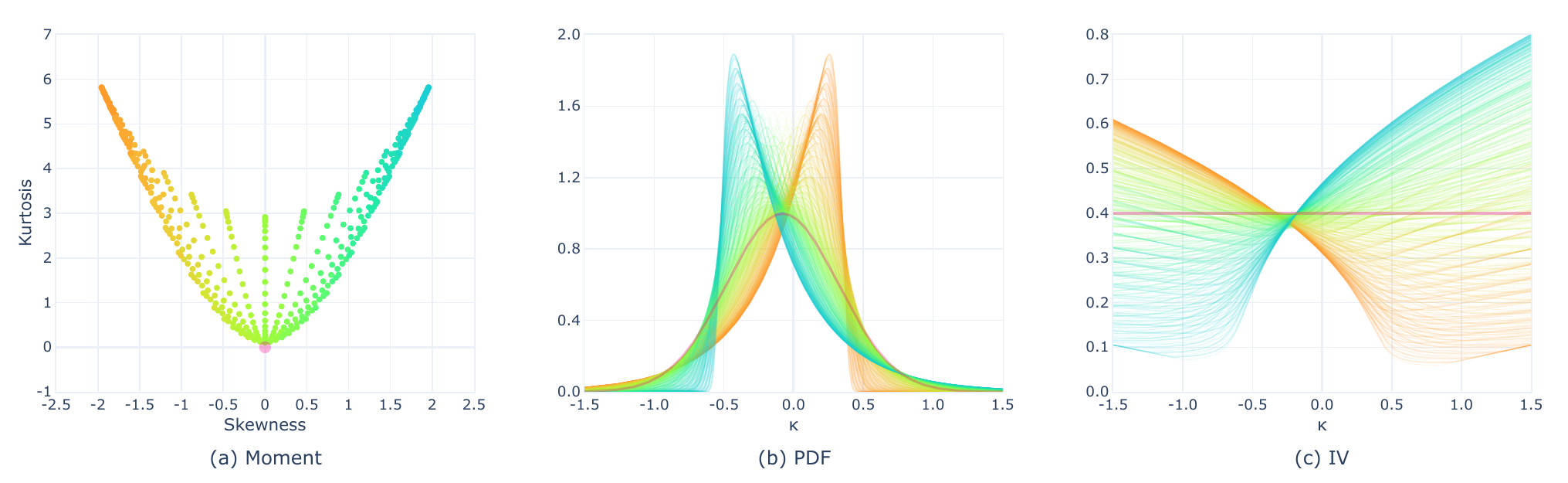}
    \caption{Risk-neutral logistic-beta distributions versus risk-neutral normal distribution. All distributions have a variance of $0.16$. The light violet color corresponds to the normal distribution. $400$ logistic-beta distributions with $(\alpha, \beta) \in \{10^{-1}, 10^{-0.9}, \dots, 10^{1}\}^2$ are plotted, specified in the same manner as Figure~\ref{fig:moment_countor}, colored by their skewnesses. Negative (positive) skewness is in the red (blue) end of the color spectrum.}
    \label{fig:moment_fixvol}
\end{figure}

Suppose for now $\tau = 1$ and we omit the $\tau$ notation. Let us fix the volatility of the return to $0.4$, equivalently $\V[X] = 0.16$. For the Black-Scholes model $X \sim N(-\frac{1}{2} \sigma^2, \sigma^2)$ it only means $\sigma=0.4$, while the additive logistic model $X \ \sim LB(\mu, \varsigma, \alpha, \beta)$ has a wide range of choices. For given $\alpha$ and $\beta$, the dispersion parameter $\varsigma$ is first determined with Equation~\eqref{eq:lb_moment} as
\begin{align} \label{eq:lb_varsigma}
    \varsigma = \left(\frac{\V[X]}{\gamma'(\alpha) + \gamma'(\beta)}\right)^{1/2},
\end{align}
and the location parameter $\mu$ is then calculated by martingality according to Equation~\eqref{eq:lb_mu}. 

Thus, even with a fixed variance, by altering the choice of $\alpha$ and $\beta$, the risk-neutral logistic-beta distribution can generate a wide range of skewness and kurtosis, and consequently various shapes of implied volatility curves. Figure~\ref{fig:moment_countor} displays contour plots of $\mu$, $\varsigma$, $\E[X]$, $\S[X]$, and $\K[X]$ from risk-neutral logistic-beta distributions with $(\alpha, \beta) \in [10^{-1}, 10^{1}]^2$. Figure~\ref{fig:moment_fixvol} collects $400$ risk-neutral logistic-beta distributions with $(\alpha, \beta) \in \{10^{-1}, 10^{-0.9}, \dots, 10^{1}\}^2$ in contrast to the single risk-neutral normal distribution. Figure~\ref{fig:moment_fixvol} (a) shows the resulting combination of skewness and kurtosis pairs, (b) shows their PDFs, and (c) illustrates the rich shapes of the resulting implied volatility curves.

\subsubsection*{Implied volatility conflation}
It is now clear how, without skewness and kurtosis, volatility is impassive for distributional information. From this point of view, we understand that implied volatility is an overloaded term that compresses dispersion, asymmetry, and heavy tails altogether. We henceforth exemplify this by assuming that the true underlying return follows the additive logistic model. The correct risk-neutral density is $X \sim LB(\mu, 0.15, 0.57, 1.15)$, where $\mu$ is calculated by Equation~\eqref{eq:lb_mu} and the other parameters are taken from parameters specified later.

\begin{figure}[ht]
    \centering
        \includegraphics[width=\textwidth]{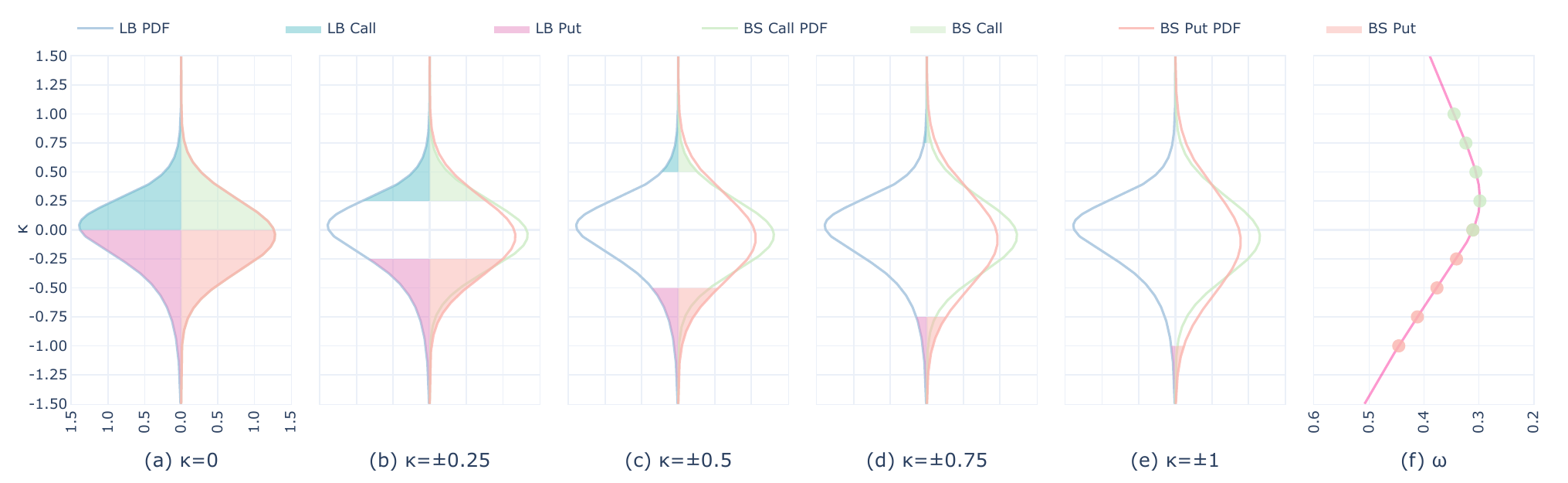}
    \caption{True logistic-beta density versus Black-Scholes densities. In (a)-(e), the consistent blue curve is the PDF $\psi_{LB}$ of $LB(\mu, 0.15, 0.57, 1.15)$, and the green (red) curves are Black-Scholes $\{\psi_{BS}^\omega(\cdot; \kappa)\}_\kappa$ with various implied volatility to match call (put) prices. The blue (purple) shaded areas correspond to the integral area of $\psi_{LB}$ on the supports of calls (puts). The green (red) shaded areas correspond to the integral area of $\{\psi_{BS}^\omega(\cdot; \kappa)\}_\kappa$ on the supports of calls (puts). (f) shows the corresponding implied volatility.}
    \label{fig:pdfiv}
\end{figure}

Figure~\ref{fig:pdfiv} animates how the Black-Scholes PDF is stretched to match option prices produced by the additive logistic model. (a)-(e) contrasts the correct logistic-beta density $\psi_{LB}$ against multiple Black-Scholes densities $\psi_{BS}^\omega(\cdot; \kappa)$ that varies with $\kappa$. Shaded areas correspond to integral areas on the supports of put prices and call prices. The consistent blue curve is the correct $\psi_{LB}$, while the green and red curves representing $\{\psi_{BS}^\omega(\cdot; \kappa)\}$ with $\kappa \in \{0, \pm0.25, \pm0.5, \pm0.75, \pm1\}$ that are stretched to match option prices. Because $\psi_{LB}$ is asymmetric and heavy-tailed, as $\kappa$ moves away from zero to both tails, $\psi_{BS}^\omega(\cdot; \kappa)$ spreads flatter at different paces. The consequent asymmetric and smile-like implied curves in (f) are a footprint indicating how $\psi_{BS}^\omega(\cdot; \kappa)$ has to alter its dispersion to compensate for the lack of skewness and kurtosis.

\subsubsection*{Implied scale as corrector}
Based on the above observation, a more holistic interpretation of the implied volatility surface can be inferred: rather than volatility, implied volatility is a correction function to pointwisely morph the Black-Scholes quasi-density into the correct density. We formalize this idea by the following proposition.

\begin{proposition} \label{prop:psi_omega}
Let $\psi$ and $\Psi$ be the market risk-neutral PDF and CDF, $p$ be the relative put option price defined in Equation~\eqref{eq:pc}, $\omega$ be the total implied volatility surface defined in Equation~\eqref{eq:omega_sigma_iv}. The following equalities hold:
\begin{align}
\pd_\tau p(\tau, \kappa)
    &= \wtilde{\psi}_{BS}^\omega(\tau, \kappa) \omega(\tau, \kappa) \pd_\tau \omega(\tau,\kappa),
    \label{eq:dp_omega}
\\
\Psi(\tau, \kappa)
    &= \Psi_{BS}^\omega(\tau, \kappa) + \zeta^\omega(\tau, \kappa),
    \label{eq:Psi_omega}
\\
\psi(\tau, \kappa)
    &= \psi_{BS}^\omega(\tau,\kappa) \xi^\omega(\tau, \kappa),
    \label{eq:psi_omega}
\end{align}
where $\zeta^\omega$ is the correction addend and $\xi^\omega$ is the correction multiplier given by
\begin{equation} \label{eq:zeta_xi}
\begin{aligned}
\zeta^\omega(\tau, \kappa)
    &= \psi_{BS}^\omega(\tau, \kappa) \omega(\tau, \kappa) \pd_\kappa \omega(\tau, \kappa),
\\
\xi^\omega(\tau, \kappa)
    &= \left(1 - \frac{\kappa}{\omega(\tau, \kappa)} \pd_\kappa \omega(\tau, \kappa) \right)^2 - \frac{1}{4} \big(\omega(\tau, \kappa) \pd_\kappa \omega(\tau, \kappa)\big)^2 + \omega(\tau, \kappa) \pd_{\kappa \kappa} \omega(\tau, \kappa).
\end{aligned}
\end{equation}
$\psi_{BS}^\omega$, $\wtilde{\psi}_{BS}^\omega$, and $\Psi_{BS}^\omega(\tau, \kappa)$ are the Black-Scholes quasi-PDF, tilted quasi-PDF, and quasi-CDF defined around Equation~\eqref{eq:pdf_n_iv}.
\end{proposition}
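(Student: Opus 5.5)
The plan is to identify the market put price with the Black-Scholes price evaluated at the implied total volatility, $p(\tau,\kappa) = p_{BS}^\omega(\tau,\kappa) = e^\kappa \Phi(z_+^\omega) - \Phi(z_-^\omega)$. We then differentiate this identity in $\tau$ and $\kappa$ by the chain rule, treating $\omega = \omega(\tau,\kappa)$ as a function of both variables. Finally, we compare the results with the generic derivatives in Equation~\eqref{eq:pd_p}, namely $\pd_\kappa p = e^\kappa \Psi$ and $\pd_{\kappa\kappa} p = e^\kappa(\Psi + \psi)$.

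The basic tool is the identity $e^\kappa \phi(z_+^\omega) = \phi(z_-^\omega)$. It follows from $\frac{1}{2}(z_-^2 - z_+^2) = \kappa$, since $z_\pm = \kappa/\omega \pm \omega/2$. Equivalently, $e^\kappa \psi_{BS}^\omega = \wtilde{\psi}_{BS}^\omega$, which is the standard vega cancellation.

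\textbf{Step 1 (calendar).} The map $\tau \mapsto p$ depends on $\tau$ only through $\omega$. Hence $\pd_\tau p = \pd_\omega p_{BS}\,\pd_\tau\omega$. Differentiating in $\omega$, with $\pd_\omega z_\pm = -\kappa/\omega^2 \pm 1/2$, gives
\[
\pd_\omega p_{BS} = e^\kappa\phi(z_+)\pd_\omega z_+ - \phi(z_-)\pd_\omega z_- .
\]
By the identity this equals $\phi(z_-)$. Since $\wtilde{\psi}_{BS}^\omega = \phi(z_-^\omega)/\omega$, we get $\phi(z_-) = \wtilde{\psi}_{BS}^\omega\,\omega$, which yields \eqref{eq:dp_omega}.

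\textbf{Step 2 (CDF).} The total $\kappa$-derivative of $p$ splits into the explicit part and the part through $\omega$.
\begin{itemize}
\item The explicit part is $e^\kappa\Phi(z_+) + e^\kappa\phi(z_+)/\omega - \phi(z_-)/\omega$. By the identity the last two terms cancel, leaving $e^\kappa \Psi_{BS}^\omega$.
\item The $\omega$-part is $\phi(z_-)\pd_\kappa\omega$, by Step 1.
\end{itemize}
Dividing by $e^\kappa$ and using the identity again gives
\[
\Psi = \Psi_{BS}^\omega + \phi(z_+)\pd_\kappa\omega = \Psi_{BS}^\omega + \psi_{BS}^\omega\,\omega\,\pd_\kappa\omega .
\]
This is \eqref{eq:Psi_omega}.

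\textbf{Step 3 (density).} Differentiate $\Psi = \Phi(z_+) + \phi(z_+)\pd_\kappa\omega$ in $\kappa$, rather than differentiating $p$ twice; this is cleaner. We obtain
\[
\psi = \phi(z_+)\bigl[\,D z_+ - z_+\,D z_+\,\pd_\kappa\omega + \pd_{\kappa\kappa}\omega\,\bigr],
\]
where $D z_+ = 1/\omega + (1/2 - \kappa/\omega^2)\pd_\kappa\omega$ is the total derivative. Factoring out $\psi_{BS}^\omega = \phi(z_+)/\omega$, the bracket times $\omega$ must equal $\xi^\omega$.

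To check this, write $a = \pd_\kappa\omega$, $u = \kappa/\omega$, and use $z_+ = u + \omega/2$. Then $\omega D z_+ = 1 - ua + \omega a/2$, so
\[
\omega D z_+\,(1 - z_+ a) = \Bigl(1 - ua + \tfrac{\omega a}{2}\Bigr)\Bigl(1 - ua - \tfrac{\omega a}{2}\Bigr) = (1-ua)^2 - \tfrac14(\omega a)^2 .
\]
This is a difference of squares, and adding $\omega\,\pd_{\kappa\kappa}\omega$ gives exactly $\xi^\omega$. Thus \eqref{eq:psi_omega} holds.

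The main obstacle is Step 3: organizing the algebra so that the difference-of-squares factorization becomes visible. Everything else is a direct chain-rule computation with the cancellation identity. Throughout, we assume $\omega$ is twice differentiable in $\kappa$, once in $\tau$, and positive.
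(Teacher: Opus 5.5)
Your proof is correct and follows the paper's route step for step. The identity $e^\kappa\phi(z_+^\omega)=\phi(z_-^\omega)$ gives the vega $\pd_\omega p_{BS}^\omega=\phi(z_-^\omega)$, the chain rule in $\tau$ gives the calendar identity, $\Psi=e^{-\kappa}\pd_\kappa p$ gives the CDF, and differentiating that CDF expression gives the density; your difference-of-squares factorization is just a tidier version of the paper's expansion via $z_+^\omega+z_-^\omega$ and $z_+^\omega z_-^\omega$.

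One small sign slip: the identity follows from $\tfrac12\bigl((z_+^\omega)^2-(z_-^\omega)^2\bigr)=\kappa$, not from $\tfrac12\bigl((z_-^\omega)^2-(z_+^\omega)^2\bigr)=\kappa$ as you wrote.
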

Equation~\eqref{eq:dp_omega} relates to the term structure of $\omega$. In Equation~\eqref{eq:Psi_omega}, the addend $\zeta^\omega$ pointwisely matches the Black-Scholes quasi-CDF with the true CDF. Equation~\eqref{eq:psi_omega} corrects the Black-Scholes quasi-PDF with the true PDF through the multiplier $\xi^\omega$. Figure~\ref{fig:div} visualizes all of these related quantities.

\begin{figure}[ht]
     \centering
        \includegraphics[width=\textwidth]{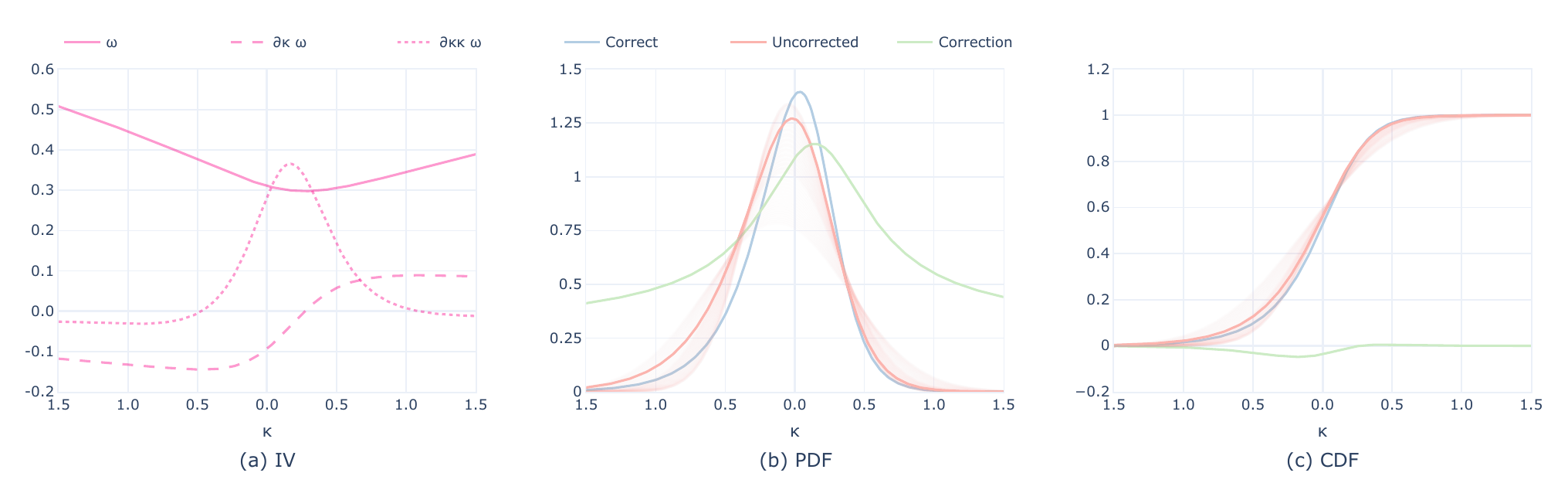}
    \caption{Implied volatility as pointwise correction. The correct distribution is $X_1 \sim LB(\mu, 0.15, 0.57, 1.15)$. (a) shows the corresponding implied total volatility $\omega$ and its spatial partial derivatives $\pd_\kappa \omega$ and $\pd_{\kappa \kappa} \omega$. In (b) (resp. (c)), blue curve is the correct PDF $\psi_{LB}$ (resp. CDF $\Psi_{LB}$), red curve is the uncorrected Black-Scholes quasi-PDF $\psi_{BS}^\omega$ (resp. quasi-CDF $\Psi_{BS}^\omega$), and green curve is the corrector $\zeta^\omega(1,x)$ (resp. $\xi^\omega)$. We additionally draw the related Black-Scholes PDFs $\{\psi_{BS}(\cdot; \kappa)\}_\kappa$ (resp. CDFs $\{\Psi_{BS}(\cdot; \kappa)\}_\kappa$) in very light red curves.}
    \label{fig:div}
\end{figure}

\begin{remark}
In the same spirit of seeing $\omega$ as the corrector, we can imagine other possible pricing formulas with correctors and other analogs of Proposition~\ref{prop:psi_omega}. For example, a degenerate additive logistic pricing formula~\eqref{eq:pc_lb} with $\alpha=\beta=1$ that is corrected by an implied scale surface $\varsigma(\tau, \kappa)$: $p_{LB}^\varsigma(\tau, \kappa) = e^\kappa \Phi_L \big(z^\varsigma(\tau, \kappa) \big) - \Phi_{LB}\big( z^\varsigma(\tau, \kappa); 1 + \varsigma(\tau, \kappa), 1 - \varsigma(\tau, \kappa) \big)$. 
\end{remark}

\subsubsection*{Sufficient conditions on implied volatility to exclude static arbitrage}

Combining Proposition~\ref{prop:psi_omega}, Equation~\eqref{eq:con_arb_p}, and \eqref{eq:pd_p} readily conveys the risk-neutrality of implied density to sufficient conditions imposed on implied volatility to exclude static arbitrages, which we casually collect in the following corollary. For more detailed treatments, we refer to \citep[Theorem 2.9]{roper2010arbitrage}.
\begin{corollary} \label{coro:no_arb}
An implied volatility surface $\omega$ admits
\begin{enumerate}[i.]
    \item no calendar spread arbitrage if, for every $\kappa \in \R$, $\omega(\cdot, \kappa)$ is nondecreasing;
    \item no vertical spread arbitrage if, for every $\tau \in \R_+$, $\Psi_{BS}^\omega(\tau, \cdot) + \zeta^\omega(\tau, \cdot)$ is a valid CDF;
    \item no butterfly spread arbitrage if, for every $\tau \in \R_+$, $\psi_{BS}^\omega(\tau, \cdot) \xi^\omega(\tau, \cdot)$ is a valid PDF.
\end{enumerate}
\end{corollary}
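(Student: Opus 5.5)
The plan is to read the Corollary as a pointwise statement about the three partial derivatives in Equation~\eqref{eq:con_arb_p}. I treat the price as generated by the surface itself, $p(\tau,\kappa) = p_{BS}^\omega(\tau,\kappa)$ from Equation~\eqref{eq:pc_n_iv}. I assume $\omega$ is positive, differentiable in $\tau$, and twice differentiable in $\kappa$. Under this reading, the identities of Proposition~\ref{prop:psi_omega} are algebraic consequences of the chain rule applied to $p_{BS}^\omega$. They do not presuppose a market density; the quantities they call $\Psi$ and $\psi$ are simply $e^{-\kappa}\pd_\kappa p$ and $e^{-\kappa}\pd_{\kappa\kappa} p - e^{-\kappa}\pd_\kappa p$, matching Equation~\eqref{eq:pd_p}. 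The first step is to check this reading briefly: differentiating $p_{BS}^\omega$ in $\kappa$ gives $e^\kappa\Phi(z_+^\omega)$ plus the Black--Scholes sensitivity to $\omega$ times $\pd_\kappa\omega$. Using the standard identity $e^\kappa\phi(z_+^\omega)=\phi(z_-^\omega)$, this equals $e^\kappa\big(\Psi_{BS}^\omega+\zeta^\omega\big)$. This confirms that Equation~\eqref{eq:Psi_omega} is exactly $\pd_\kappa p = e^\kappa(\Psi_{BS}^\omega+\zeta^\omega)$, and Equations~\eqref{eq:dp_omega} and \eqref{eq:psi_omega} follow in the same way.

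For (i), Equation~\eqref{eq:dp_omega} gives $\pd_\tau p = \wtilde{\psi}_{BS}^\omega\,\omega\,\pd_\tau\omega$. The factors $\wtilde{\psi}_{BS}^\omega = \phi(z_-^\omega)/\omega$ and $\omega$ are strictly positive, so $\pd_\tau\omega\ge 0$ implies $\pd_\tau p\ge0$. Integrating in $\tau$ over $[\tau,\tau+\Delta\tau]$ then yields the finite-difference calendar condition.

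For (ii), if $\Psi_{BS}^\omega(\tau,\cdot)+\zeta^\omega(\tau,\cdot)$ is a valid CDF, it is nonnegative. Hence $\pd_\kappa p = e^\kappa(\Psi_{BS}^\omega+\zeta^\omega)\ge 0$, and integrating in $\kappa$ gives the vertical spread inequality.

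For (iii), Equation~\eqref{eq:pd_p} gives $\pd_{\kappa\kappa}p = e^\kappa(\Psi+\psi)$, with $\Psi = \Psi_{BS}^\omega+\zeta^\omega$ and $\psi = \psi_{BS}^\omega\xi^\omega$. If $\psi_{BS}^\omega\xi^\omega$ is a valid PDF, then $\psi\ge 0$. Its antiderivative is $\Psi$, since by construction $\pd_\kappa(\Psi_{BS}^\omega+\zeta^\omega)=\psi_{BS}^\omega\xi^\omega$, so $\Psi$ is nondecreasing. With the limit $\Psi\to 0$ as $\kappa\to-\infty$, this gives $\Psi\ge 0$, hence $\pd_{\kappa\kappa}p\ge0$, i.e.\ $p(\tau,\cdot)$ is convex. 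The discrete butterfly inequality then follows from convexity.

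The main obstacle is this last point, and more generally the bookkeeping needed to make the Proposition's identities valid for an arbitrary candidate surface $\omega$ rather than a market-consistent one. In particular, I must show that the additive constant relating $\Psi_{BS}^\omega+\zeta^\omega$ to the integral of $\psi_{BS}^\omega\xi^\omega$ vanishes. I would try to establish this from the boundary behaviour of $p_{BS}^\omega$ as $\kappa\to-\infty$: $p\to 0$ and $\pd_\kappa p\to 0$. Alternatively, I would simply take the pair of conditions (ii) and (iii) together, which removes the issue.
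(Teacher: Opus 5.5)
Your route is the paper's own. The paper gives no separate proof; it only remarks that Proposition~\ref{prop:psi_omega}, combined with \eqref{eq:con_arb_p} and \eqref{eq:pd_p}, turns $\pd_\tau p$, $\pd_\kappa p = e^\kappa(\Psi_{BS}^\omega+\zeta^\omega)$ and $\pd_{\kappa\kappa}p = e^\kappa(\Psi_{BS}^\omega+\zeta^\omega+\psi_{BS}^\omega\xi^\omega)$ into the stated conditions. Your reading of the identities as pure calculus on $p_{BS}^\omega$ is the right one. The subtlety you flag in (iii) is real, and the paper passes over it. In \eqref{eq:con_arb_omega} the paper simply imposes $\zeta^\omega \ge -\Psi_{BS}^\omega$ and $\xi^\omega \ge 0$ jointly. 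That is your fallback, and it proves that (ii) and (iii) together exclude butterfly arbitrage, not that (iii) alone does, as stated.

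You can close (iii) on its own, and you do not need the additive constant to vanish. You only need $\Psi := \Psi_{BS}^\omega+\zeta^\omega = e^{-\kappa}\pd_\kappa p \ge 0$. Note that $\pd_\kappa p \to 0$ is automatic and carries no information. The useful input is that at each point $p_{BS}^\omega$ is a Black--Scholes put price with the positive constant $\omega(\tau,\kappa)$. Hence $(e^\kappa-1)^+ < p < e^\kappa$; in particular $p>0$ and $p\to 0$ as $\kappa\to-\infty$.

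Now suppose $\Psi(\tau,\kappa_0)<0$ for some $\kappa_0$. Monotonicity gives $\Psi<0$ on $(-\infty,\kappa_0]$, so $p(\tau,\cdot)$ is strictly decreasing there. Then $p(\tau,\kappa) > p(\tau,\kappa_0) > 0$ for all $\kappa<\kappa_0$, contradicting $p\to0$. Hence $\Psi\ge 0$ and $\pd_{\kappa\kappa}p \ge 0$, using only the nonnegativity half of ``valid PDF''.

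If you also use unit mass, you get the vanishing constant as well. The right-wing bound gives
$p\,e^{-\kappa} = \int_0^\infty e^{-u}\Psi(\tau,\kappa-u)\,\dd u \to \Psi(\tau,+\infty)$,
while also $p\,e^{-\kappa}\to 1$. Therefore $\Psi(\tau,+\infty)=1$ and $\Psi(\tau,-\infty)=0$, so condition (iii) in fact implies (ii).
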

Because we only focus on data from the bound region $(\tau, \kappa) \in [\underline{\tau}, \overline{\tau}] \times [\underline{\kappa}, \overline{\kappa}]$ in the next section, we leave out asymptotics of expiring tenor and extreme moneyness. Hence, the conditions we need from Corollary~\ref{coro:no_arb} boil down to the following:
\begin{align} \label{eq:con_arb_omega}
\pd_\tau \omega(\tau,\kappa) \ge 0,
&&
\zeta^\omega(\tau, \kappa) \ge -\Psi_{BS}^\omega(\tau, \kappa),
&&
\xi^\omega(\tau, \kappa) \ge 0.
\end{align}
The nonlinearity of $\zeta^\omega$ and $\xi^\omega$ still hinders direct parameterization of $\omega$, which stimulates the use of a neural network to approximate $\omega$.

\section{Neural Representation Shallowed} \label{sec:neural}

The expressibility of neural networks has been extensively studied in the past decades. \citep{hornik1989multilayer, hornik1990universal, hornik1991approximation} provided very general conditions for standard multilayer feedforward networks with sufficiently smooth activation to well approximate functions and their derivatives with respect to Lebesgue space. Although theoretically verified, expressibility relies on a particular design of neural network structure, including depth, width, and choice of activation functions. For instance, \citep{eldan2016power} prioritizes depth of network over width for better expressibility, while \citep{cabanilla2024neural} shows that one can reduce the depth of the network with an appropriate choice of activation functions. Nonetheless, building a neural representation of implied volatility and implied density falls outside the standard setting considered above; its viability should be tested with extensive experiments.

In contrast to previous successful implementations in \citep{ackerer2020deep, zheng2021incorporating, wiedemann2024operator, yang2025hyperiv} with deep or sophisticated neural networks, our goal is to find the simplest neural representations for option implied information upon the representative additive logistic model. It is motivated by the following rationales. Firstly, both implied density and implied volatility are, after all, smooth bivariate functions, and they seem to have been consistently behaving under the usual market conditions for a long time \citep{dotsis2025option}. So, we expect that a simple neural network is sufficient for expressibility. Second, the additive logistic model is highly customizable for both implied density and implied volatility, and is representative of usual market conditions \citep{lin2024neural}. We suggest that it is preliminary to fit the neural representation before real data; a more holistic experiment can be conducted, such as goodness of fit to implied density that is not accessible in the real situation; the well-trained neural representation can be the initial guess for learning with real market data. With such a goal, we design the neural representation framework and the experiments.

\subsubsection*{Framework}
Let $\what{f}$ denote the neural network approximation of function $f$. The neural representation framework feeds on an option chain $\kD$ as the input data to learn the implied density $\what{\psi}(\cdot, \cdot)$ that is backed by the implied total volatility $\womega(\cdot, \cdot)$. The framework of neural representation of option implied information consists of a simple feedforward neural network plus several auto-differentiable modules, outlined in Figure~\ref{fig:framework}. We now describe each module.

\begin{figure}[ht]
     \centering
        \includegraphics[width=0.8\textwidth]{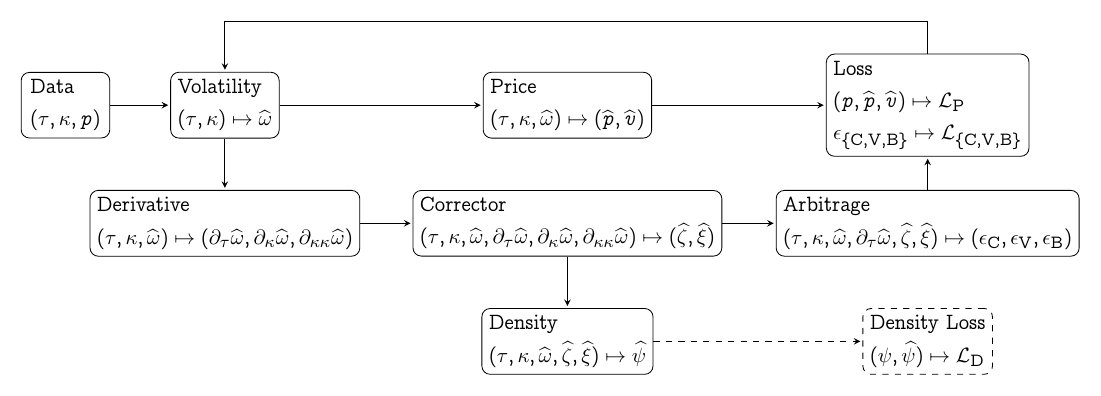}
    \caption{Neural representation framework.}
    \label{fig:framework}
\end{figure}

\textbf{Data.}
The dataset includes tenors, moneynesses, and the corresponding option prices in the option chain. We denote the dataset by $\kD = \{(\tau, \kappa, p(\tau, \kappa)\}_{(\tau, \kappa) \in \kT \times \kK}$, where the tenor set $\kT = \{\tau_1, \tau_2, \dots, \tau_N\}$ contains $\abs{\kT} = N$ number of tenors with uniform grid difference $\Delta \tau$ and the moneyness set $\kK =\{\kappa_1, \kappa_2, \dots, \kappa_M\}$ has $\abs{\kK} = M$ number of moneynesses with uniform grid difference $\Delta \kappa$. There are $\abs{\kD} = \abs{\kT} \abs{\kK} = NM$ data points. In the real scenario, the data grid is uneven - number of listed moneynesses varies among different tenors - so here, setting the data points into the Cartesian product of tenor set and moneyness set is a simplification.

\textbf{Volatility module: $(\tau, \kappa) \mapsto \womega(\tau, \kappa)$.}
We approximate the total implied volatility with a feedforward neural network $\womega: \R^2 \to \R_+$ with $L$ hidden layers and an output layer indexed by $L+1$:
\begin{align} \label{eq:cV_comp}
\womega(\tau, \kappa) = \cV(\tau, \kappa) = \cA_{L+1} \circ \cF_{L+1} \circ \cdots \cA_1 \circ \cF_1(\tau, \kappa)
\end{align}
where, for $1 \le l \le L+1$, $\cF_l: \R^{n_{l-1}} \to \R^{n_l}$ that $\cF_l(x_{l-1}) = W_l x_{l-1} + b_l$ is the affine map with $W_l \in \R^{n_l} \times \R^{n_{l-1}}$ being the weights and $b_l \in \R^{n_l}$ being the bias, $\cA_l: \R^{n_l} \to \R^{n_l}$ is the element-wise activation function, and $n_l$ is the width of the layer. $l=0$ corresponds to the input $(\tau, \kappa)$ of dimension $n_0 = 2$. The output layer $l = L+1$ gives a scalar so $n_{L+1} = 1$. As the implied volatility is nonnegative, the last activation function should output nonnegative values. 

\textbf{Price module: $(\tau, \kappa, \womega) \mapsto \big(\what{p}(\tau, \kappa), \what{v}(\tau, \kappa)\big)$.}
With total implied volatility $\womega$ calculated, we calculate the approximated option price $\what{p}$ and in addition the vega $\what{v}$ by plugging $\womega$ into Equation~\eqref{eq:pc_n} and Equation~\eqref{eq:vega}:
\begin{align} \label{eq:p_hat}
\what{p}(\tau, \kappa) = p_{BS}^{\womega}(\tau, \kappa),
&&
\what{v}(\tau, \kappa) = v_{BS}^{\womega}(\tau, \kappa).
\end{align}

\textbf{Derivative module: $(\tau, \kappa, \womega) \mapsto \big(\pd_\tau \womega(\tau, \kappa), \pd_\kappa \womega(\tau, \kappa), \pd_{\kappa \kappa} \womega(\tau, \kappa)\big)$.}
It is easier to express partial derivatives of $\womega$ starting from the recursive form of Equation~\eqref{eq:cV_comp}: 
\begin{equation} \label{eq:cV_rec}
\begin{cases}
x_0 = (\tau, \kappa),
\\
x_l = \cA_l(y_l),
\quad
y_l = \cF_l(x_{l-1}),
\quad
1 \le l \le L + 1,
\\
\womega = x_{L+1}.
\end{cases}
\end{equation}
Differentiating Equation~\eqref{eq:cV_rec} we obtain the first order partial derivatives $\pd_\tau \womega$ and $\pd_\kappa \womega$,
\begin{equation} \label{eq:d_omega_hat}
\begin{cases}
\pd_\cdot \womega = \pd_\kappa x_{L + 1}
\\
\pd_\cdot x_l = \cA_{l}'(y_{l}) \odot (W_l \pd_\cdot x_{l-1}),
\quad
1 \le l \le L + 1,
\\
\pd_\cdot x_0 =(\pd_\cdot \tau, \pd_\cdot \kappa), 
\end{cases}
\end{equation}
and the second order partial derivative $\pd_{\kappa \kappa} \womega$,
\begin{equation} \label{eq:dd_omega_hat}
\begin{cases}
\pd_{\kappa \kappa} \womega = \pd_{\kappa \kappa} x_{L + 1},
\\
\pd_{\kappa \kappa} x_l = \cA_{l}''(y_{l}) \odot (W_l \pd_\cdot x_{l-1})^{\odot 2} + \cA_{l}'(y_{l}) \odot (W_l \pd_{\kappa \kappa} x_{l-1}),
\quad
1 \le l \le L + 1,
\\
\pd_{\kappa \kappa} x_0 = 0,
\end{cases}
\end{equation}
where $\odot$ is Hadamard product, $\cA'$ and $\cA''$ are the element-wise first and second derivatives of $\cA$.

\textbf{Corrector module: $(\tau, \kappa, \womega, \pd_\tau \womega, \pd_\kappa \womega, \pd_{\kappa \kappa} \womega) \mapsto  \big(\what{\zeta}(\tau, \kappa), \what{\xi}(\tau, \kappa)\big)$.} The correctors are calculated by plugging $\womega$ into Equation~\eqref{eq:zeta_xi}:
\begin{align}
\what{\zeta}(\tau, \kappa) = \zeta^{\womega}(\tau, \kappa),
&&
\what{\xi}(\tau, \kappa) = \xi^{\womega}(\tau, \kappa).
\end{align}

\textbf{Density module: $(\tau, \kappa, \womega, \what{\zeta}, \what{\xi}) \mapsto \what{\psi}(\tau, \kappa)$.} The estimated implied density $\what{\psi}$ is calculated by substituting $\womega$ into Equation~\eqref{eq:psi_omega} and \eqref{eq:zeta_xi}:
$$\what{\psi}(\tau, \kappa) = \psi_{BS}^{\womega}(\tau, \kappa) \xi^{\womega}(\tau, \kappa).$$

\textbf{Arbitrage module: $(\tau, \kappa, \womega, \pd_\tau \womega, \what{\zeta}, \what{\xi}) \mapsto \big(\epsilon_C(\tau, \kappa), \epsilon_V(\tau, \kappa), \epsilon_B(\tau, \kappa)\big)$.} Errors corresponding to calendar spread arbitrage, vertical spread arbitrage, and butterfly spread arbitrage are
\begin{align} \label{eq:e_arb}
\epsilon_C(\tau, \kappa)
    = \( - \pd_\tau \womega(\tau, \kappa)\)^+,
&&
\epsilon_V(\tau, \kappa)
    = \( - \Psi_{BS}^{\womega}(\tau, \kappa) - \zeta^{\womega}(\tau, \kappa) \)^+,
&&
\epsilon_B(\tau, \kappa)
    = \( - \xi^{\womega}(\tau, \kappa) \)^+,
\end{align}
where $\zeta$ and $\xi$ are calculated based on Equation~\eqref{eq:zeta_xi} with $\womega$ plugged in. $\epsilon_{\{C, V, B\}}$ quantifies the degree to which inequalities~\eqref{eq:con_arb_omega} are violated, and any feasible model should ensure they are equal to zero.

\textbf{Loss module.} 
Evaluated over the dataset $\kD$, the total loss function $\cL(\kD)$ is the sum of losses caused by pricing loss, $\cL_P(\kD)$, and arbitrage violations, $\cL_C(\kD)$, $\cL_V(\kD)$, $\cL_B(\kD)$, calculated as
\begin{equation} \label{eq:L}
\cL(\kD) = \underbrace{\cL_P(\kD)}_{\text{weighted price}} 
    + \underbrace{\cL_C(\kD)}_{\text{calendar spread}}
    + \underbrace{\cL_V(\kD)}_{\text{vertical spread}}
    + \underbrace{\cL_B(\kD)}_{\text{butterfly spread}}
\end{equation}
where each loss is chosen to be the root mean square error as
\begin{equation} \label{eq:L_def}
\begin{aligned}
\cL_P(\kD)
    &= \(\frac{1}{\abs{\kD}} \sum_{(\tau,\kappa) \in \kT \times \kK} \frac{1}{\what{v}(\tau, \kappa)}\big(\what{p}(\tau, \kappa) - p(\tau,\kappa)\big)^2\)^{1/2},
\\
\cL_{\{C, V, B\}}(\kD)
    &= \(\frac{1}{\abs{\kD}} \sum_{(\tau,\kappa) \in \kT \times \kK} \big(\epsilon_{\{C, V, B\}}(\tau, \kappa) \big)^2\)^{1/2},
\end{aligned}    
\end{equation}
A valid trained model should hold that
\begin{equation}
\cL_{\{C, V, B\}}(\kD) \equiv 0.
\end{equation}
Also notice that the vega weighted price loss $\cL_P$ is equivalent to the implied volatility loss.

\textbf{Extra density loss module.}
The market risk-neutral distribution is not observable in practice. We exploit the known ground truth density in the synthetic data experiment to examine the goodness of fit to the implied density of the neural representation. Density deviation is measured with the density loss $\cL_D$ over the dataset $\kD$ that is designed as
\begin{align}
\cL_D(\kD) = \(\frac{1}{\abs{\kT}} \sum_{(\tau, \kappa) \in \kT \times \kK} \big(\what{\psi}(\tau, \kappa) - \psi(\tau, \kappa)\big)^2 \Delta \kappa \)^{1/2}.
\end{align}
In the later section, we will report a case where the $\what{p}$ and $\what{\sigma}$ well approximate $p$ and $\sigma$, but the $\what{\psi}$ significantly deviates from $\psi$. 

\subsubsection*{Experiment Specification}
\textbf{Synthetic market model.} We select the additive logistic model as the synthetic model for the market, with the following term structure that was tested in \citep{azzone2025explicit, lin2024neural}:
\begin{align} \label{eq:lb_term}
\varsigma(\tau) = \varsigma_0 \tau^{h_0}, 
&&
\alpha(\tau) = \alpha_1 + \frac{\alpha_0 - \alpha_1}{1 + \varsigma(\tau)},
&&
\beta(\tau) = \beta_1 + \frac{\beta_0 - \beta_1}{1 + \varsigma(\tau)} + \varsigma(\tau).
\end{align}
We choose the following parameters to produce the typical left-skewed implied density and implied volatility:
\begin{align} \label{eq:lb_term_param}
\varsigma_0 = 0.15,
&&
h_0 = 0.5,
&&
\alpha_0 = 0.5,
&&
\alpha_1 = 1,
&&
\beta_0 = 1,
&&
\beta_1 = 1.
\end{align}
Figure~\ref{fig:lb} illustrates the corresponding implied density surface and the implied volatility surface.

\begin{figure}[ht]
     \centering
     \begin{subfigure}[b]{0.4\textwidth}
         \centering
         \includegraphics[width=\textwidth, trim=1cm 0cm 1cm 2cm, clip=true]{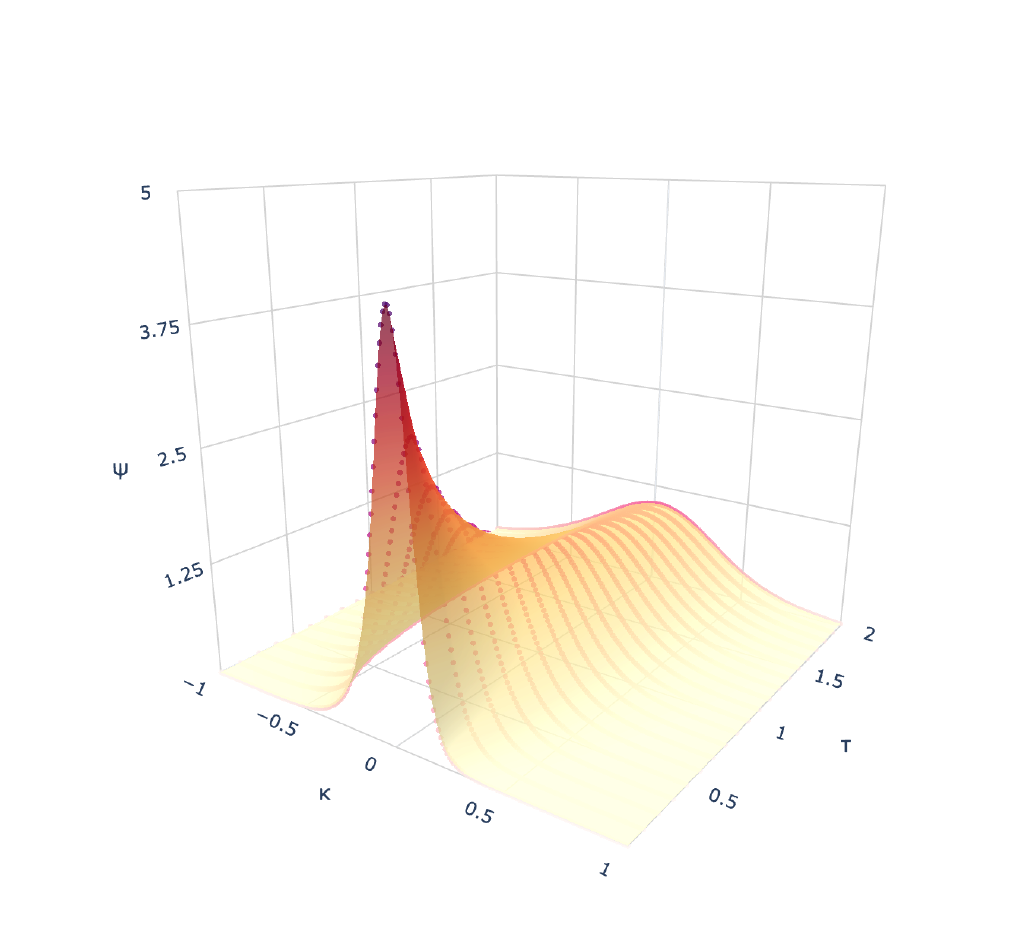}
         \caption{Implied density}
     \end{subfigure}
     \begin{subfigure}[b]{0.4\textwidth}
         \centering
         \includegraphics[width=\textwidth, trim=1cm 0cm 1cm 2cm, clip=true]{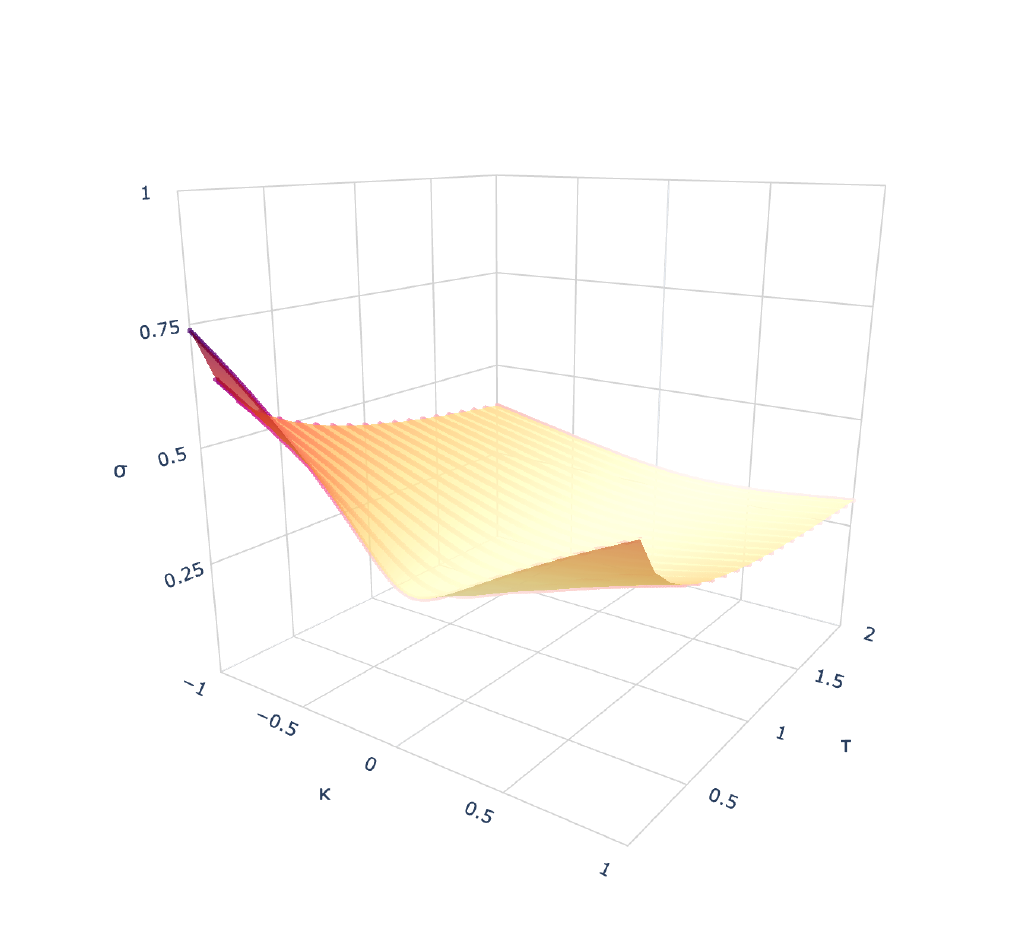}
         \caption{Implied volatility}
     \end{subfigure}
    \caption{Synthetic surfaces generated by additive logistic model with term structure in Equation~\eqref{eq:lb_term} and \eqref{eq:lb_term_param}. Red scatters are data points of the training set $\kD_T$.}
    \label{fig:lb}
\end{figure}

\textbf{Data.} 
The experiment contains two datasets: the training set $\kD_T$ to imitate the observable option chain in the real scenario, and the validation set $\kD_V$, which contains denser points for investigating the model quality. The training set has $20$ tenors in $\kT_T = \{0.1, 0.2, \dots, 2\}$ with $\Delta \tau_T = 0.1$ and $201$ moneynesses in $\kK_T = \{-1, -0.99, \dots, 1\}$ with $\Delta \kappa_T = 0.01$, so $\abs{\kD_T} = 4,020$. The validation set contains $191$ tenors in $\kT_V = \{0.1, 0.11, \dots, 2\}$ with $\Delta \tau_V = 0.01$ and $2010$ moneynesses in $\kK_T = \{-1, -0.999, \dots, 1\}$ with $\Delta \kappa_T = 0.001$, so $\abs{\kD_V} = 383,910$. Thus, the validation set is a finer partition of the training set with a ratio of $\abs{\kD_T}:\abs{\kD_V} = 1:95.5$. 

With the mesh grid chosen, the option prices are calculated with Equation~\eqref{eq:pc_lb}. So far, we explain the methodology mainly in terms of the put option $p$, as it is mathematically equivalent through the put-call parity $c - p = 1 - e^\kappa$. Yet for the sake of numerical precision in actual calculation, we shall instead use the out-of-the-money option prices defined by 
\begin{align} \label{eq:o}
o(\tau, \kappa) := \one_{\{\kappa \le 0\}} p(\tau, \kappa) + \one_{\{\kappa > 0\}} c(\tau, \kappa).
\end{align}
This is a simple adaptation obtained by replacing $p$ (resp. $\what{p}$) with $o$ (resp. $\what{o}$) in Equation~\eqref{eq:p_hat} and \eqref{eq:L_def}.

\textbf{Neural network structure.}
The core module in the system is the volatility module $\cV$ defined in Equation~\eqref{eq:cV_comp}. We start with the choice of its activation functions. Concerning the nonnegativity of $\omega$, an intuitive choice of the activation function in the last layer is the softplus activation $\cA_{L+1} = \cA_{\text{Softplus}}$, whose formula and derivatives are
\begin{align}
\cA_{\text{Softplus}}(x) = \ln(1 + e^x),
&&
\cA_{\text{Softplus}}'(x) = \frac{e^x}{1 + e^x},
&&
\cA_{\text{Softplus}}''(x) = \frac{e^x}{(1 + e^x)^2}.
\end{align}
Notice that the first and second derivatives of the softplus activation coincide with the logistic CDF and PDF given in Equation~\eqref{eq:pdf_l}, which potentially regulates the values of $\pd_\kappa \womega$ and $\pd_{\kappa \kappa} \womega$ to the scale of the CDF and PDF, as suggested by Equation~\eqref{eq:Psi_omega} and \eqref{eq:psi_omega}.

The twice differentiability of $\womega$ is backed by the softplus activation, so we have a wide choice of activations for hidden layers. For simplicity, we let all $\{\cA_{l}\}_{1 \le l \le L}$ have the same activation, and choose from the following candidates: ReLU, quadratic ReLU, cubic ReLU, ELU, and Tanh.
\begin{equation}
\begin{aligned}
&\cA_{\text{ReLU}}(x) = (x)^+,
&
&\cA'_{\text{ReLU}}(x) = \one_{\{x > 0\}},
&
&\cA''_{\text{ReLU}}(x) = 0,
\\
&\cA_{\text{ReLU2}}(x) = \frac{1}{2}((x)^+)^2,
&
&\cA_{\text{ReLU2}}'(x) = (x)^+,
&
&\cA_{\text{ReLU2}}''(x) = \one_{\{x > 0\}},
\\
&\cA_{\text{ReLU3}}(x) = \frac{1}{6}((x)^+)^3,
&
&\cA_{\text{ReLU3}}'(x) = \frac{1}{2}((x)^+)^2,
&
&\cA_{\text{ReLU3}}''(x) = (x)^+,
\\
&\cA_{\text{ELU}}(x) = (e^x - 1) \one_{\{x \le 0\}} + x \one_{\{x > 0\}},
&
&\cA_{\text{ELU}}'(x) = e^x \one_{\{x \le 0\}} + \one_{\{x > 0\}},
&
&\cA_{\text{ELU}}''(x) = e^x \one_{\{x \le 0\}},
\\
&\cA_{\text{Tanh}}(x) = \tanh(x),
&
&\cA_{\text{Tanh}}'(x) = \sech^2(x),
&
&\cA_{\text{Tanh}}''(x) = -2 \tanh(x) \sech^2(x).
\end{aligned}
\end{equation}
The motivation for testing these activation functions is as follows. $\cA_{\text{ReLU}}$ has the simplest form and derivatives that simplifies the calculation of $\pd_\cdot \womega$ and $\pd_{\kappa \kappa} \womega$ in Equation~\eqref{eq:d_omega_hat} and \eqref{eq:dd_omega_hat}. Testing $\cA_{\text{ReLU2}}$ and $\cA_{\text{ReLU3}}$ is motivated by \citep{cabanilla2024neural} to facilitate discovery of possible shallow networks. $\cA_{\text{ELU}}$ slightly enhances the smoothness of $\cA_{\text{ReLU}}$. $\cA_{\text{Tanh}}$ is smooth and, in contrast to the other four, bounded.

Noticing the trade-off between the expressivity of the neural network and the increasing complexity of the nested derivative in Equation~\eqref{eq:d_omega_hat} and \eqref{eq:dd_omega_hat} with respect to depth, we hence consider only shallow neural networks with no more than three hidden layers. As for the width of each layer, we select $32$, $64$, and $128$ neurons corresponding to small, medium, and large widths in our context. 

For naming convention, we use $\cN_a^{n, l}$ to refer to neural representation framework as explained in Section~\ref{sec:neural} whose volatility module that consists of corresponding $l$ number of hidden layers of width $n$ and activation function $a$, whereas notation $\cN_a$ without superscript refers to the collection of all the $a$-activation type of neural representations. For example, $\cN_{\text{ReLU}}^{1, 32}$ is the neural representation whose volatility module has one hidden layer of $32$ nodes and $\cA_{\text{ReLU}}$ activation; $\cN_{\text{ReLU}}$ refers to all the ReLU type neural representations. There are in total $45$ models to test:
\begin{equation} \label{eq:nn_set}
\left\{ \cN_a^{n,l}: (a, n, l)\in \{\text{ReLU}, \text{ReLU2}, \text{ReLU3}, \text{ELU}, \text{Tanh}\} \times \{32, 64, 128\} \times \{1, 2, 3\}\right\}.
\end{equation}

\textbf{Training.}
It turns out that the conventional configuration already serves our training purpose. The weights of neural network are initialized with He uniform \citep{he2015delving}, where $\{W_l, b_l\}$ are uniformly distributed between $-n_{l-1}^{-1/2}$ and $n_{l-1}^{-1/2}$ where $n_{l-1}$ is the input dimension of layer $l$. Optimization uses the Adam optimizer \citep{kingma2014adam} with learning rate $10^{-4}$, betas $0.9$ and $0.99$, and epsilon $10^{-16}$, iterating with shuffled mini-batch gradient scheme over $1000$ epochs. We use a batch size of $256$, so there are $\lceil \abs{\kD_T} / 256 \rceil = 16$ shuffled mini-batches, denoted by $\{\kD_T^b\}_{b=1}^{16}$ such that $\bigcup_{b=1}^{16} \kD_T^b = \kD_T$. Since we adopt mini-batch gradient descent, in addition to the losses evaluated over entire datasets, for each epoch of training, we record epoch-wise losses defined as
\begin{align} \label{eq:L_epoch}
\cL^e = \cL_P^e + \cL_C^e + \cL_V^e +\cL_B^e,
&&
\cL_{\{P, C, A, B\}}^e = \frac{1}{16} \sum_{b=1}^{16}\cL_{\{P, C, A, B\}}(\kD_T^b),
\end{align}
which are simple averages of losses on each mini-batch that are also calculated with Equation~\eqref{eq:L_def}.

\subsubsection*{Results and diagnosis}
We first note that all trained models, except for a few ones that fail to converge, are free of arbitrage, with $\cL_{C, V, B}^e(\kD_T) \leadsto 0$ easily achieved within at most $300$ epochs. We also verify that $\cL_{C, V, B}(\kD_V) = 0$. So, the total loss eventually consists of only price loss $\cL \leadsto \cL_P$. Figure~\ref{fig:epoch_loss} reports learning curves of epoch-wise total losses defined in Equation~\eqref{eq:L_epoch} of the $45$ neural representations specified in Equation~\eqref{eq:nn_set}. Losses are reported in basis points (bps) throughout this section, $1 \text{bps} = 10^{-4}$. The first notable observation is the failure of the ReLU3 models. All deep ReLU3 models, $\cN_{\text{ReLU3}}^{\{32, 64, 128\}, \{2, 3\}}$, do not converge. The only exception is the single and wide ReLU3 models $\cN_{\text{ReLU3}}^{\{64, 128\}, 1}$, which appear to learn, but only achieve mediocre performance. Performance of $\cN_{\text{ReLU2}}$ varies. Depending on the network structure, there are not yet converged $\cN_{\text{ReLU2}}^{32, 3}$, bottom performer $\cN_{\text{ReLU2}}^{32, 2}$, and top performer $\cN_{\text{ReLU2}}^{128, 1}$. We will take a closer look at the pattern of $\cN_{\text{ReLU2}}$ shortly. $\cN_{\text{ELU}}$ and $\cN_{\text{Tanh}}$ generally perform well. We shall highlight a special case. $\cN_{\text{ReLU}}$ are most plausible as all models have training loss less than $15$ bps, and some of them achieve the lowest loss around $4$ bps. Unfortunately, further investigation shows that this remarkable price fit is misleading. 

\begin{figure}[ht]
     \centering
        \includegraphics[width=\textwidth]{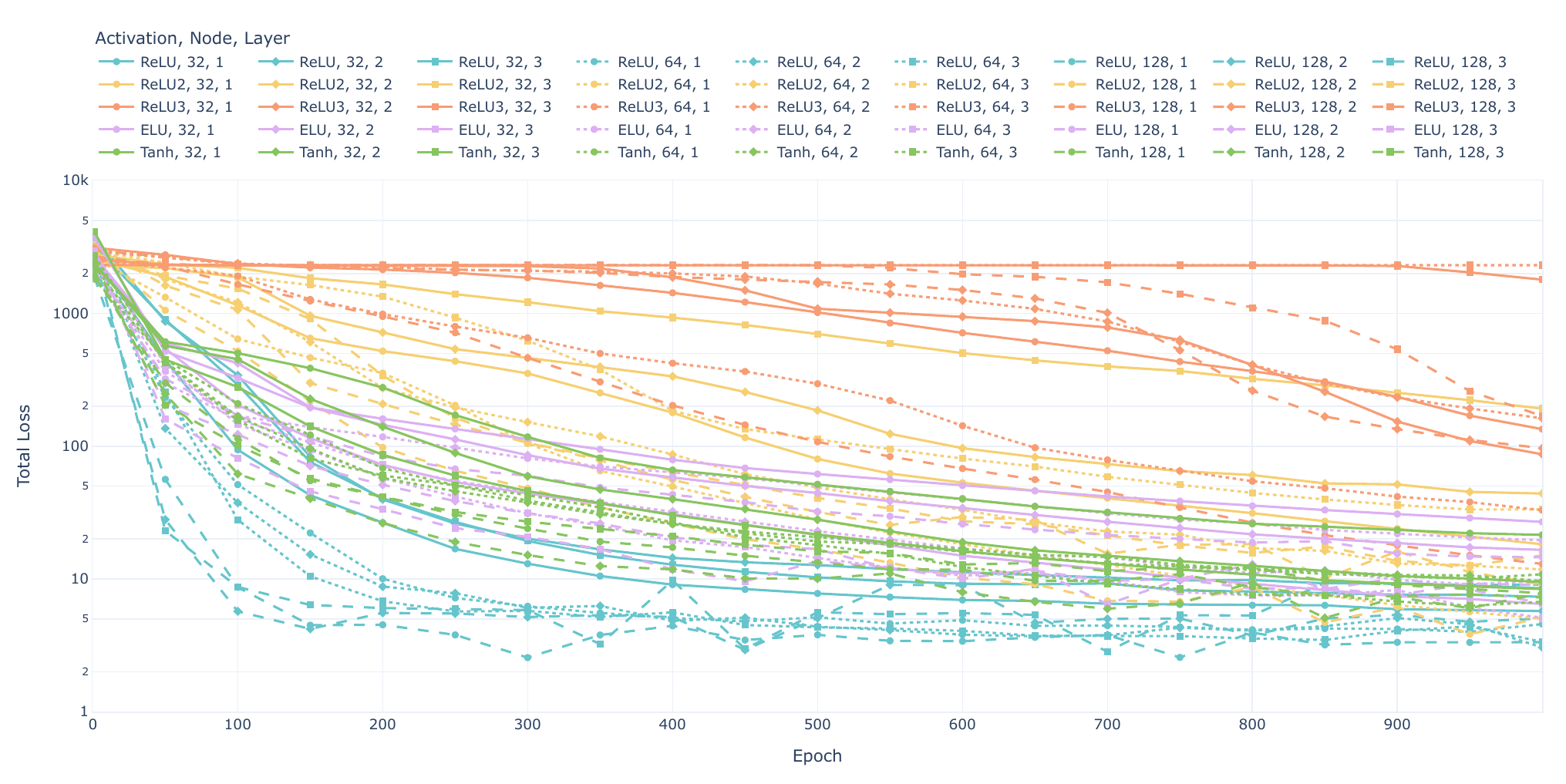}
    \caption{Epoch-wise total losses defined in Equation~\eqref{eq:L_epoch} of the $45$ neural representations specified in Equation~\eqref{eq:nn_set}. Unit of loss is basis points (bps), $1 \text{bps} = 10^{-4}$. Color corresponds to activation, line type to width, and marker shape to depth.}
    \label{fig:epoch_loss}
\end{figure}

Recall that Figure~\ref{fig:epoch_loss} only indicates the in-sample price fitting capacity of models with $\kD_T$, it remains to examine the density loss $\cL_D$ to confirm if the risk-neutral density is learned correctly, and check losses evaluated with the validation set $\kD_V$ to understand the generalization ability of the models. These are illustrated by Figure~\ref{fig:loss_scatter} through four scatter plots of model performances on: (a) training set price loss versus validation set price loss, $\cL_P(\kD_T)$ versus $\cL_P(\kD_V)$, (b) training set density loss versus validation set density loss, $\cL_D(\kD_T)$ versus $\cL_D(\kD_V)$, (c) training set price losses versus training set density loss, $\cL_P(\kD_T)$ versus $\cL_D(\kD_T)$, and (d) validation set price loss versus validation set density loss, $\cL_P(\kD_V)$ versus $\cL_D(\kD_V)$. Points with excessively large losses are trimmed out. We henceforth examine the activations more closely.

\begin{figure}[ht]
     \centering
        \includegraphics[width=\textwidth]{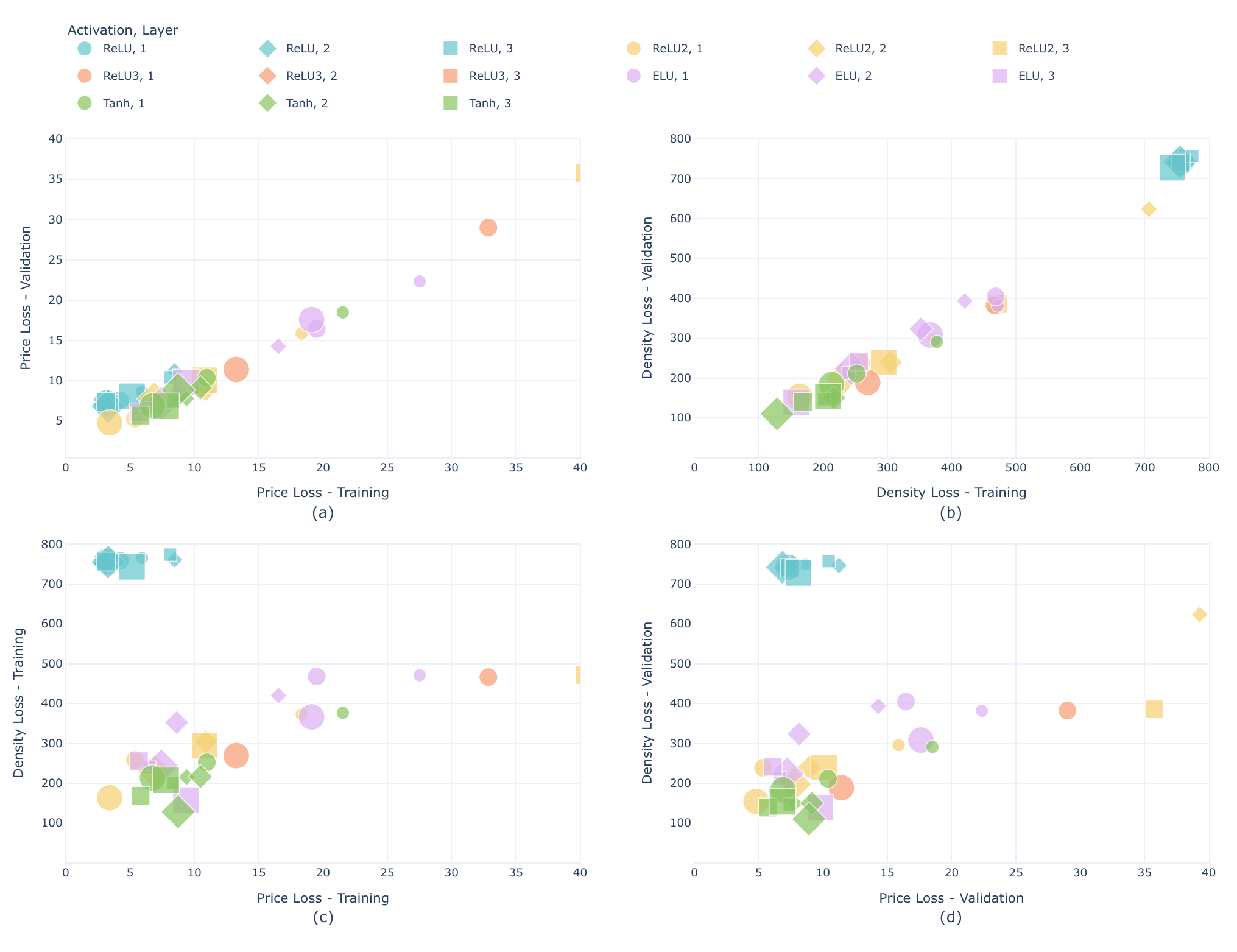}
    \caption{Scatter plots on price loss and density loss of training set and validation set. Color corresponds to activation, marker shape to depth, and marker size to width.}
    \label{fig:loss_scatter}
\end{figure}

\textbf{ReLU type neural representations.} Although $\cN_{\text{ReLU}}$ in Figure~\ref{fig:epoch_loss} seem to perform well with final epoch-wise price loss less than $15$ bps, Figure~\ref{fig:loss_scatter} (a) shows that $\cN_{\text{ReLU}}$ systematically deteriorate in validation price loss. For instance, $\cN_{\text{ReLU}}^{64, 3}$ has $\cL_P(\kD_T) \approx 3$ bps but $\cL_P(\kD_V) \approx 7$ bps, the latter being over twice that of the former. Such noticeable validation deterioration is not found on other activation types. Despite the suggested inferior generalization ability, $\cN_{\text{ReLU}}$ still perform in the top tier if we only consider the price loss. However, density losses in Figure~\ref{fig:loss_scatter} (b)-(d) further expose that $\cN_{\text{ReLU}}$ dramatically diverge from the correct risk-neutral density with $\cL_D(\{\kD_T, \kD_V\}) \approx 750$ bps. Figure~\ref{fig:relus} (a) and (b) clearly illustrates that $\cN_{\text{ReLU}}^{64, 3}$ mismatches the implied density in spite of the precise fitting on implied volatility. This phenomenon can be attributed to the piecewise-linear nature of the ReLU activation, whose indicator first derivative and zero second derivative oversimplify the neural derivatives in Equations~\eqref{eq:d_omega_hat} and~\eqref{eq:dd_omega_hat}, limiting their ability to approximate the correctors in Equation~\eqref{eq:zeta_xi}.

\begin{figure}[ht]
     \centering
     \begin{subfigure}[b]{0.4\textwidth}
         \centering
         \includegraphics[width=\textwidth, trim=1cm 0cm 1cm 2cm, clip=true]{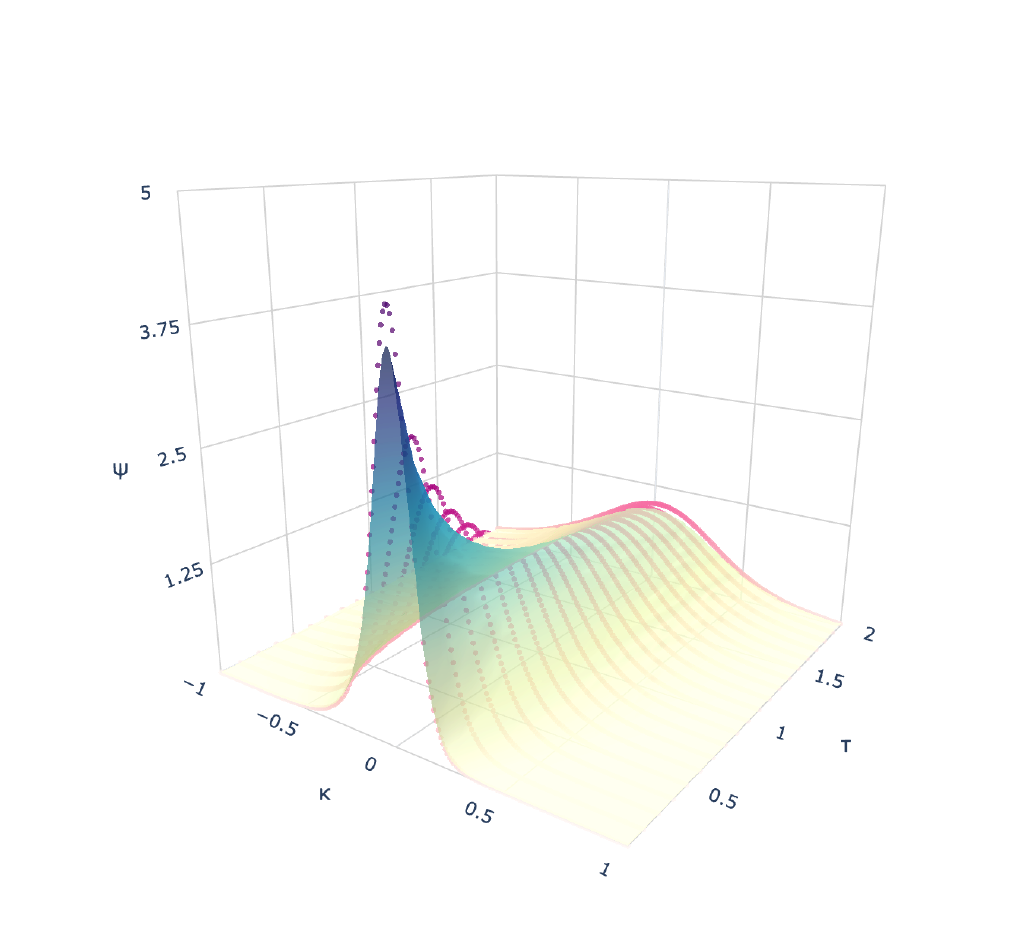}
         \caption{Implied density of $\cN_{\text{ReLU}}^{64, 3}$}
     \end{subfigure}
     \begin{subfigure}[b]{0.4\textwidth}
         \centering
         \includegraphics[width=\textwidth, trim=1cm 0cm 1cm 2cm, clip=true]{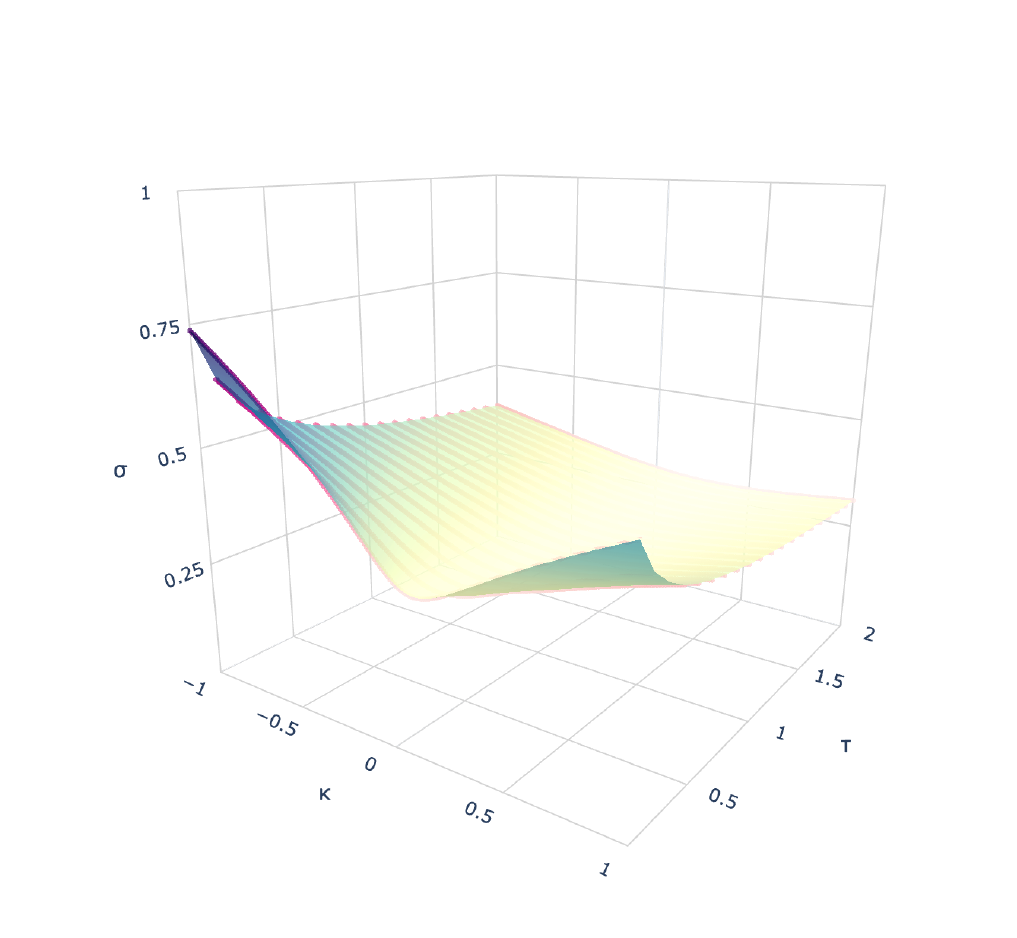}
         \caption{Implied volatility of $\cN_{\text{ReLU}}^{64, 3}$}
     \end{subfigure}
     \\
      \begin{subfigure}[b]{0.4\textwidth}
         \centering
         \includegraphics[width=\textwidth, trim=1cm 0cm 1cm 2cm, clip=true]{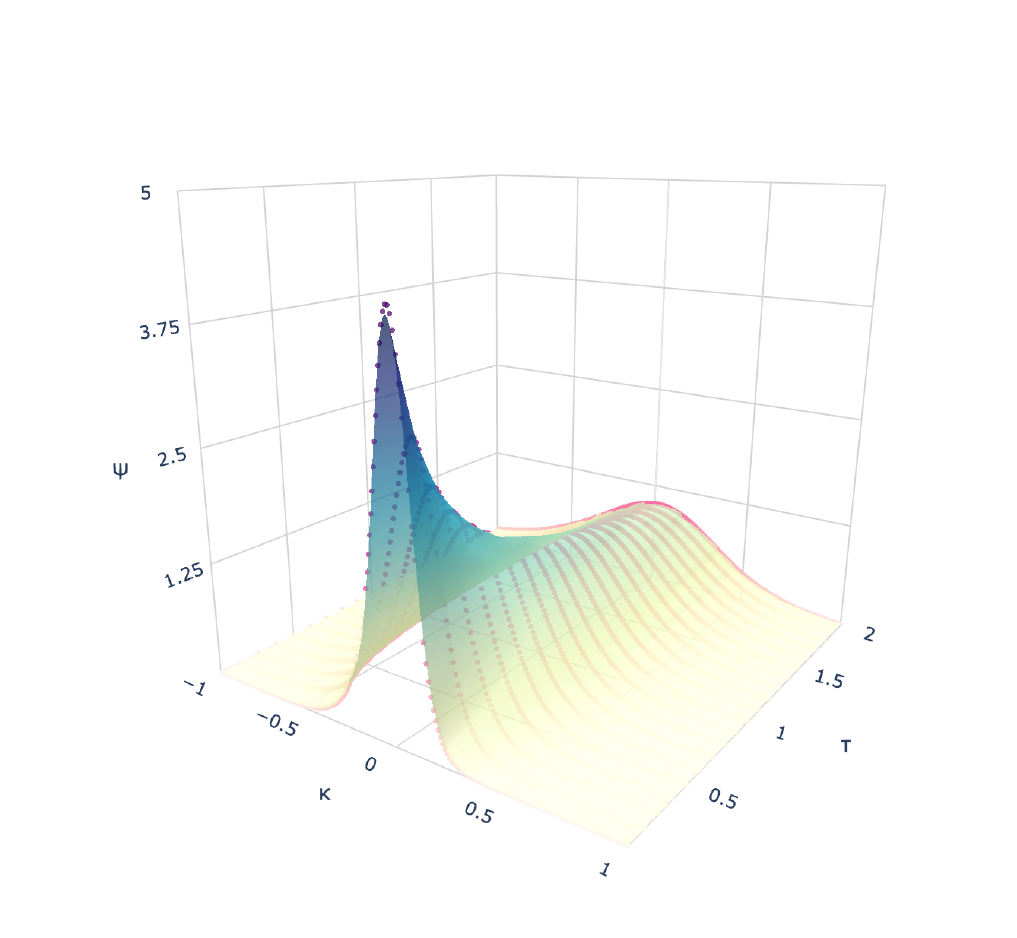}
         \caption{Implied density of $\cN_{\text{ReLU2}}^{128, 1}$}
     \end{subfigure}
      \begin{subfigure}[b]{0.4\textwidth}
         \centering
         \includegraphics[width=\textwidth, trim=1cm 0cm 1cm 2cm, clip=true]{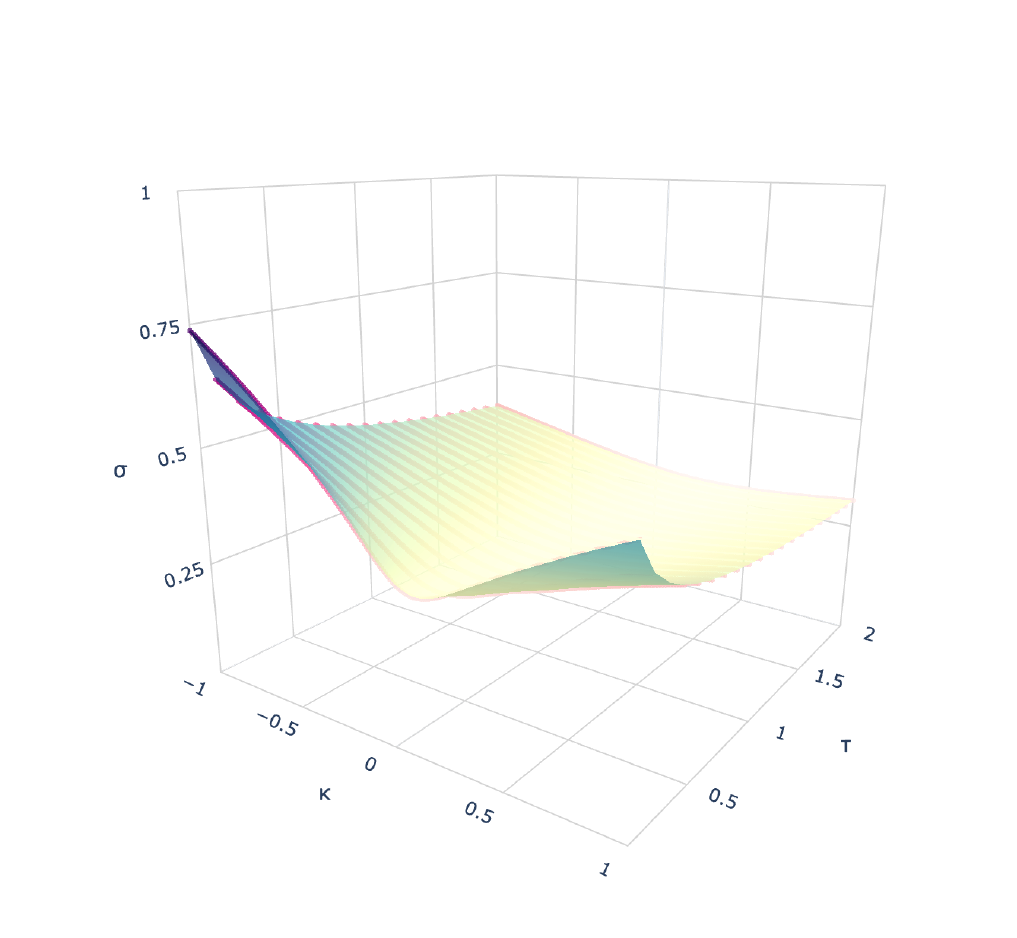}
         \caption{Implied volatility of $\cN_{\text{ReLU2}}^{128, 1}$}
     \end{subfigure}
    \caption{Surfaces of $\cN_{\text{ReLU}}^{64, 3}$ and $\cN_{\text{ReLU2}}^{128, 1}$. Red scatters are data points of the training set $\kD_T$.}
    \label{fig:relus}
\end{figure}

\textbf{ReLU2 type neural representations.} $\cN_{\text{ReLU2}}$ is worth noting as their performance spreads widely from top performer to bottom performer depending on the neural network structure. Their performance are consistent in both price loss and density loss, ordered by $\cN_{\text{ReLU2}}^{32, 3} \prec \cN_{\text{ReLU2}}^{32, 2} \prec \cN_{\text{ReLU2}}^{64, 3} \prec \cN_{\text{ReLU2}}^{32, 1} \prec \cN_{\text{ReLU2}}^{64, 2} \simeq \cN_{\text{ReLU2}}^{128, 3} \prec \cN_{\text{ReLU2}}^{64, 1} \simeq \cN_{\text{ReLU2}}^{128, 2} \prec \cN_{\text{ReLU2}}^{128, 1}$. The two close matching pairs, $\cN_{\text{ReLU2}}^{64, 2} \simeq \cN_{\text{ReLU2}}^{128, 3}$ and $\cN_{\text{ReLU2}}^{64, 1} \simeq \cN_{\text{ReLU2}}^{128, 2}$, well illustrate the trade-off between increased expressivity and more complicated derivative in Equation~\eqref{eq:d_omega_hat} and \eqref{eq:dd_omega_hat} for deeper neural representation. Remarkably, $\cN_{\text{ReLU2}}^{128, 1}$ is the only single hidden layer representation that have top tier performance, with $\cL_P(\{\kD_T, \kD_V\}) \approx 5$ bps and $\cL_D(\{\kD_T, \kD_V\}) \approx 160$ bps. Figure~\ref{fig:relus} (c) and (d) shows the precise fit of $\cN_{\text{ReLU2}}^{128, 1}$ on both implied volatility and implied density.

\textbf{ReLU3 type neural representations.} As mentioned above, $\cN_{\text{ReLU3}}$ generally fail except weak performer $\cN_{\text{ReLU3}}^{\{64, 128\}, 1}$. We notice that roughly $\cN_{\text{ReLU3}}^{64, 1} \simeq \cN_{\text{ReLU2}}^{64, 3}$ and $\cN_{\text{ReLU3}}^{128, 1} \simeq \cN_{\text{ReLU2}}^{128, 3}$, which agrees with \cite{cabanilla2024neural} that neural networks with ReLU powers need less depth. But in our context, $\cN_{\text{ReLU3}}$ could not achieve better results than the shallow $\cN_{\text{ReLU2}}^{128, 1}$, due to nonlinearity in both constraints and neural derivatives.

\textbf{ELU type neural representations.}
$\cN_{\text{ELU}}$ present a more obvious pattern in contrast to the mixed order of $\cN_{\text{ReLU2}}$ by depth and width. $\cN_{\text{ELU}}^{32, 1} \prec \cN_{\text{ELU}}^{64, 1} \simeq \cN_{\text{ELU}}^{128, 1} \prec \cN_{\text{ELU}}^{32, 2} \prec \cN_{\text{ELU}}^{64, 2} \prec \cN_{\text{ELU}}^{128, 2}$. This pattern breaks with deeper structure, as $\cN_{\text{ELU}}^{64, 3} \prec \cN_{\text{ELU}}^{32, 3} \prec \cN_{\text{ELU}}^{128, 2} \prec \cN_{\text{ELU}}^{128, 3}$ in price loss.

\textbf{Tanh type neural representations.}
We do not find a general pattern for $\cN_{\text{Tanh}}$. $\cN_{\text{Tanh}}^{\{32, 64, 128\}, 1}$ show improving results with width, while $\cN_{\text{Tanh}}^{\{32, 64, 128\}, \{2, 3\}}$ do not. 
Price loss is reduced with depth $\cN_{\text{Tanh}}^{64, 3} \prec \cN_{\text{Tanh}}^{128, 3} \prec \cN_{\text{Tanh}}^{32, 3} \prec \cN_{\text{Tanh}}^{128, 2} \simeq \cN_{\text{Tanh}}^{32, 2} \prec \cN_{\text{Tanh}}^{64, 2}$, but $\cN_{\text{Tanh}}^{128,2}$ has the lowest density loss. $\cN_{\text{Tanh}}$ in general performs better than $\cN_{\text{ELU}}$.

\begin{figure}[ht]
     \centering
        \includegraphics[width=\textwidth]{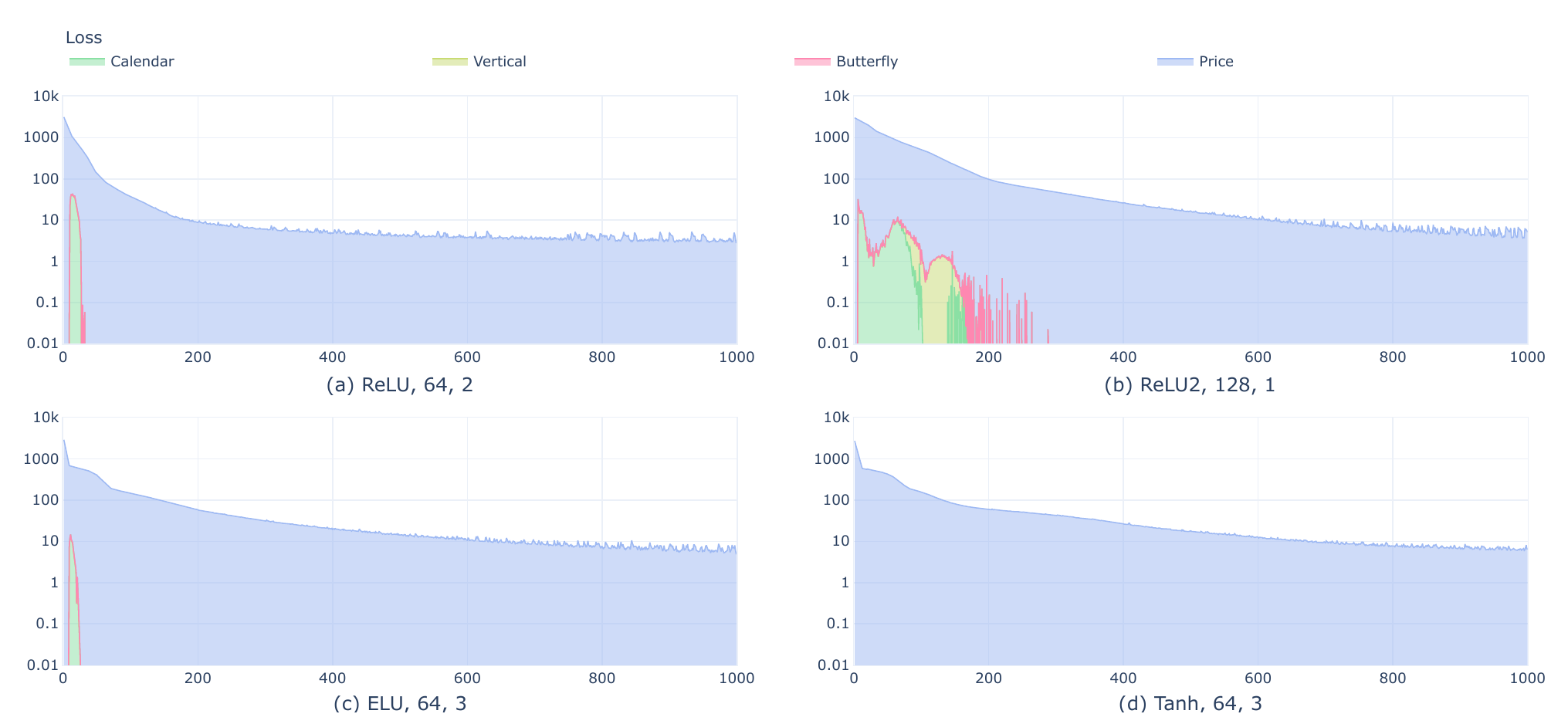}
    \caption{Itemized learning curves of top performers.}
    \label{fig:loss_type_best}
\end{figure}

With the above analysis, we identify some neural representations of interests. Some are the top performers within their activation types, and some show curiously reduced performance compared to their shallower or narrower counterparts. We thus examine their itemized learning curves Figure~\ref{fig:loss_type_best} reports within group top performers, $\cL_{\{P, C, V, B\}}^e$ of $\cN_{\text{ReLU}}^{64, 2}$, $\cN_{\text{ReLU2}}^{128, 1}$, $\cN_{\text{ELU}}^{64, 3}$, and $\cN_{\text{Tanh}}^{128, 3}$. All of them easily become free of arbitrage in a few epochs. Figure~\ref{fig:loss_type_best} (b) shows that $\cN_{\text{ReLU}}^{128, 1}$ requires slightly more iterations to rule out arbitrage.

\begin{figure}[ht]
     \centering
        \includegraphics[width=\textwidth]{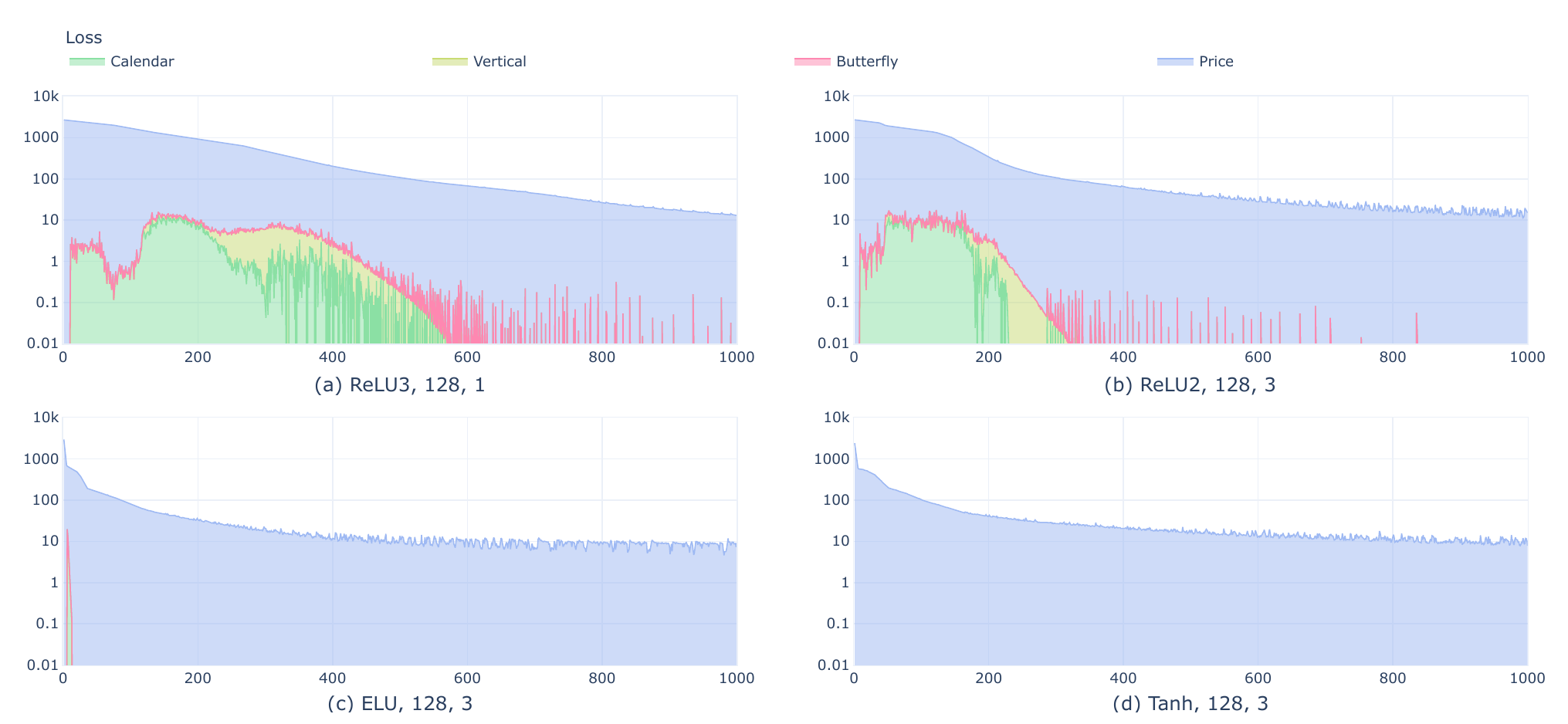}
    \caption{Itemized learning curves of noteworthy neural representations.}
    \label{fig:loss_type_curious}
\end{figure}

Figure~\ref{fig:loss_type_curious} collects $\cN_{\text{ReLU3}}^{128, 1}$, $\cN_{\text{ReLU2}}^{128, 3}$, $\cN_{\text{ELU}}^{128, 3}$, and $\cN_{\text{Tanh}}^{128, 3}$. We have pointed out earlier that $\cN_{\text{ReLU3}}^{128, 1} \simeq \cN_{\text{ReLU2}}^{128, 3}$, and Figure~\ref{fig:loss_type_curious} (a) and (b) indicates that they seem to suffer the similar difficulty in excluding arbitrage. Price loss curves in Figure~\ref{fig:loss_type_curious} (c) and (d) are noisier and reach their plateau earlier than those in Figure~\ref{fig:loss_type_best} (c) and (d), which leads to the inferior performance of $\cN_{\text{ELU}}^{128, 3}$ and $\cN_{\text{Tanh}}^{128, 3}$ to $\cN_{\text{ELU}}^{64, 3}$ and $\cN_{\text{Tanh}}^{64, 3}$.

\section{Discussion}

This study revisits the representation of option implied information from three complementary perspectives. Section~\ref{sec:option} simplified classical option pricing by reassembling familiar variable transformations, leading to an inherent standardization of data. Section~\ref{sec:volatility} reframed implied volatility as a pointwise corrector that morphs the Black-Scholes quasi-density into the market-implied risk-neutral density, thereby establishing an explicit density-volatility parity and clarifying the theoretical basis for arbitrage-free constraints. Section~\ref{sec:neural} demonstrated that this parity can be operationalized within a neural-network framework and that even shallow feedforward structures are capable of encoding both realistic implied volatility and implied density under arbitrage conditions.

Building a neural representation of option implied information is, however, a delicate task in two respects. First, conventional goodness of fit to implied volatility does not ensure a correct implied density, since the latter remains unobservable in practice. The $\cN_{\text{ReLU}}$ case exemplifies this modeling risk: excellent price fitting may coexist with an incorrect implied density, potentially leading to mispricing of exotic derivatives. Second, there is no universal rule of thumb for network design. Because of the nonlinearity in the arbitrage constraints and neural derivatives, the effects of depth, width, and activation smoothness vary across architectures. Upscaling networks does not necessarily improve performance, as seen in patterns such as $\cN_{\{\text{ELU}, \text{Tanh}\}}^{128,3} \prec \cN_{\{\text{ELU}, \text{Tanh}\}}^{64,3}$ and $\cN_{\text{ReLU2}}^{\{32,64,128\},3} \prec \cN_{\text{ReLU2}}^{\{32,64,128\},2} \prec \cN_{\text{ReLU2}}^{\{32,64,128\},1}$. Increasing activation smoothness does not always flatten the effective representation, and higher-order ReLU variants do not outperform the shallower quadratic case. Nonetheless, the performance of $\cN_{\text{ReLU2}}^{\{64, 128\},1}$ confirms that a simple, single-hidden-layer network with appropriate activations can provide an accurate and arbitrage-free representation of option implied information.

Overall, the findings demonstrate that shallow neural networks offer a concise yet expressive basis for modeling option implied information. The proposed representation flexibly captures a potentially wide range of realistic implied-volatility and implied-density surfaces at a given point in time and naturally lends itself to future investigation of their temporal evolution.

%

\setcitestyle{numbers}
\bibliographystyle{chicago}
\bibliography{ref}

@book{balakrishnan1991handbook,
  title={Handbook of the Logistic Distribution},
  author={Balakrishnan, Narayanaswamy},
  year={1991},
  publisher={CRC Press}
}

@article{carr2005note,
  title={A note on sufficient conditions for no arbitrage},
  author={Carr, Peter and Madan, Dilip B},
  journal={Finance Research Letters},
  volume={2},
  number={3},
  pages={125--130},
  year={2005},
  publisher={Elsevier}
}

@article{cox1976valuation,
  title={The valuation of options for alternative stochastic processes},
  author={Cox, John C and Ross, Stephen A},
  journal={Journal of Financial Economics},
  volume={3},
  number={1-2},
  pages={145--166},
  year={1976},
  publisher={Elsevier}
}

@article{merton1976option,
  title={Option pricing when underlying stock returns are discontinuous},
  author={Merton, Robert C},
  journal={Journal of Financial Economics},
  volume={3},
  number={1-2},
  pages={125--144},
  year={1976},
  publisher={Elsevier}
}

@article{bates1996jumps,
  title={Jumps and stochastic volatility: Exchange rate processes implicit in Deutsche Mark options},
  author={Bates, David S},
  journal={The Review of Financial Studies},
  volume={9},
  number={1},
  pages={69--107},
  year={1996},
  publisher={Oxford University Press}
}

@article{heston1993closed,
  title={A closed-form solution for options with stochastic volatility with applications to bond and currency options},
  author={Heston, Steven L},
  journal={The Review of Financial Studies},
  volume={6},
  number={2},
  pages={327--343},
  year={1993},
  publisher={Oxford University Press}
}

@article{dupire1994pricing,
  title={Pricing with a smile},
  author={Dupire, Bruno},
  journal={Risk},
  volume={7},
  number={1},
  pages={18--20},
  year={1994}
}

@article{black1973pricing,
  title={The pricing of options and corporate liabilities},
  author={Black, Fischer and Scholes, Myron},
  journal={Journal of Political Economy},
  volume={81},
  number={3},
  pages={637--654},
  year={1973},
  publisher={The University of Chicago Press}
}

@article{choi2022black,
  title={A Black--Scholes user's guide to the Bachelier model},
  author={Choi, Jaehyuk and Kwak, Minsuk and Tee, Chyng Wen and Wang, Yumeng},
  journal={Journal of Futures Markets},
  volume={42},
  number={5},
  pages={959--980},
  year={2022},
  publisher={Wiley Online Library}
}

@article{kingma2014adam,
  title={Adam: A method for stochastic optimization},
  author={Kingma, Diederik P and Ba, Jimmy},
  journal={arXiv preprint arXiv:1412.6980},
  year={2014}
}

@article{azzone2023fast,
  title={A fast Monte Carlo scheme for additive processes and option pricing},
  author={Azzone, Michele and Baviera, Roberto},
  journal={Computational Management Science},
  volume={20},
  number={1},
  pages={31},
  year={2023},
  publisher={Springer}
}

@article{carr2021additive,
  title={Additive logistic processes in option pricing},
  author={Carr, Peter and Torricelli, Lorenzo},
  journal={Finance and Stochastics},
  volume={25},
  number={4},
  pages={689--724},
  year={2021},
  publisher={Springer}
}

@article{ackerer2020deep,
  title={Deep smoothing of the implied volatility surface},
  author={Ackerer, Damien and Tagasovska, Natasa and Vatter, Thibault},
  journal={Advances in Neural Information Processing Systems},
  volume={33},
  pages={11552--11563},
  year={2020}
}

@article{hornik1989multilayer,
  title={Multilayer feedforward networks are universal approximators},
  author={Hornik, Kurt and Stinchcombe, Maxwell and White, Halbert},
  journal={Neural Networks},
  volume={2},
  number={5},
  pages={359--366},
  year={1989},
  publisher={Elsevier}
}

@article{hornik1990universal,
  title={Universal approximation of an unknown mapping and its derivatives using multilayer feedforward networks},
  author={Hornik, Kurt and Stinchcombe, Maxwell and White, Halbert},
  journal={Neural Networks},
  volume={3},
  number={5},
  pages={551--560},
  year={1990},
  publisher={Elsevier}
}

@article{hornik1991approximation,
  title={Approximation capabilities of multilayer feedforward networks},
  author={Hornik, Kurt},
  journal={Neural Networks},
  volume={4},
  number={2},
  pages={251--257},
  year={1991},
  publisher={Elsevier}
}

@inproceedings{eldan2016power,
  title={The power of depth for feedforward neural networks},
  author={Eldan, Ronen and Shamir, Ohad},
  booktitle={Conference on Learning Theory},
  pages={907--940},
  year={2016}
}

@article{cabanilla2024neural,
  title={Neural networks with ReLU powers need less depth},
  author={Cabanilla, Kurt Izak M and Mohammad, Rhudaina Z and Lope, Jose Ernie C},
  journal={Neural Networks},
  volume={172},
  pages={106073},
  year={2024},
  publisher={Elsevier}
}

@inproceedings{he2015delving,
  title={Delving deep into rectifiers: Surpassing human-level performance on imagenet classification},
  author={He, Kaiming and Zhang, Xiangyu and Ren, Shaoqing and Sun, Jian},
  booktitle={Proceedings of the IEEE International Conference on Computer Vision},
  pages={1026--1034},
  year={2015}
}

@article{wiedemann2024operator,
  title={Operator deep smoothing for implied volatility},
  author={Wiedemann, Ruben and Jacquier, Antoine and Gonon, Lukas},
  journal={arXiv preprint arXiv:2406.11520},
  year={2024}
}

@article{breeden1978prices,
  title={Prices of state-contingent claims implicit in option prices},
  author={Breeden, Douglas T and Litzenberger, Robert H},
  journal={Journal of Business},
  volume={51},
  number={4},
  pages={621--651},
  year={1978},
  publisher={JSTOR}
}

@article{durrleman2010implied,
  title={Implied volatility: Market models},
  author={Durrleman, Valdo},
  journal={Encyclopedia of Quantitative Finance},
  year={2010},
  publisher={Wiley Online Library}
}

@article{benko2007extracting,
  title={On extracting information implied in options},
  author={Benko, Michal and Fengler, Matthias and H{\"a}rdle, Wolfgang and Kopa, Milos},
  journal={Computational Statistics},
  volume={22},
  number={4},
  pages={543--553},
  year={2007},
  publisher={Springer}
}

@article{roper2010arbitrage,
  title={Arbitrage free implied volatility surfaces},
  author={Roper, Michael},
  journal={Preprint},
  year={2010}
}

@phdthesis{roper2009implied,
  title={Implied volatility: General properties and asymptotics},
  author={Roper, Michael},
  year={2009},
  school={UNSW Sydney}
}

@article{tavin2012implied,
  title={Implied distribution as a function of the volatility smile},
  author={Tavin, Bertrand},
  journal={Bankers Markets and Investors},
  number={119},
  pages={31--42},
  year={2012}
}

@article{brunner2003arbitrage,
  title={Arbitrage-free estimation of the risk-neutral density from the implied volatility smile},
  author={Brunner, Bernhard and Hafner, Reinhold},
  journal={Journal of Computational Finance},
  year={2003},
  volume={7},
  number={1},
  pages={75--106}
}

@article{jackwerth2000recovering,
  title={Recovering risk aversion from option prices and realized returns},
  author={Jackwerth, Jens Carsten},
  journal={The Review of Financial Studies},
  volume={13},
  number={2},
  pages={433--451},
  year={2000},
  publisher={Oxford University Press}
}

@article{figlewski2018risk,
  title={Risk-neutral densities: A review},
  author={Figlewski, Stephen},
  journal={Annual Review of Financial Economics},
  volume={10},
  number={1},
  pages={329--359},
  year={2018},
  publisher={Annual Reviews}
}

@article{bergomi2012stochastic,
  title={Stochastic volatility's orderly smiles},
  author={Bergomi, Lorenzo and Guyon, Julien},
  journal={Risk},
  volume={25},
  number={5},
  pages={60},
  year={2012},
  publisher={Incisive Media Limited}
}

@article{bayer2016pricing,
  title={Pricing under rough volatility},
  author={Bayer, Christian and Friz, Peter and Gatheral, Jim},
  journal={Quantitative Finance},
  volume={16},
  number={6},
  pages={887--904},
  year={2016},
  publisher={Taylor \& Francis}
}

@article{jaber2022quintic,
  title={The quintic Ornstein-Uhlenbeck volatility model that jointly calibrates SPX \& VIX smiles},
  author={Jaber, Eduardo Abi and Illand, Camille and Li, Shaun Xiaoyuan},
  journal={arXiv preprint arXiv:2212.10917},
  year={2022}
}

@article{gatheral2004parsimonious,
  title={A parsimonious arbitrage-free implied volatility parameterization with application to the valuation of volatility derivatives},
  author={Gatheral, Jim},
  journal={Presentation at Global Derivatives \& Risk Management, Madrid},
  year={2004}
}

@inproceedings{avellaneda2005sabr,
  title={From SABR to geodesics},
  author={Avellaneda, Marco},
  booktitle={Conference Presentation at Courant Institute; New York: New York University},
  year={2005}
}

@book{gatheral2006volatility,
  title={The volatility surface: a practitioner's guide},
  author={Gatheral, Jim},
  year={2006},
  publisher={Wiley}
}

@article{gatheral2014arbitrage,
  title={Arbitrage-free SVI volatility surfaces},
  author={Gatheral, Jim and Jacquier, Antoine},
  journal={Quantitative Finance},
  volume={14},
  number={1},
  pages={59--71},
  year={2014},
  publisher={Taylor \& Francis}
}

@article{fengler2009arbitrage,
  title={Arbitrage-free smoothing of the implied volatility surface},
  author={Fengler, Matthias R},
  journal={Quantitative Finance},
  volume={9},
  number={4},
  pages={417--428},
  year={2009},
  publisher={Taylor \& Francis}
}

@article{andreasen2011volatility,
  title={Volatility interpolation},
  author={Andreasen, Jesper and Huge, Brian},
  journal={Risk},
  volume={24},
  number={3},
  pages={76},
  year={2011},
  publisher={Incisive Media Limited}
}

@article{glaser2012arbitrage,
  title={Arbitrage-free approximation of call price surfaces and input data risk},
  author={Glaser, Judith and Heider, Pascal},
  journal={Quantitative Finance},
  volume={12},
  number={1},
  pages={61--73},
  year={2012},
  publisher={Taylor \& Francis}
}

@article{corbetta2019robust,
  title={Robust calibration and arbitrage-free interpolation of SSVI slices},
  author={Corbetta, Jacopo and Cohort, Pierre and Laachir, Ismail and Martini, Claude},
  journal={Decisions in Economics and Finance},
  volume={42},
  number={2},
  pages={665--677},
  year={2019},
  publisher={Springer}
}

@article{hendriks2019extended,
  title={The extended SSVI volatility surface},
  author={Hendriks, Sebas and Martini, Claude},
  journal={Journal of Computational Finance},
  volume={22},
  pages={25--39},
  year={2019}
}

@inproceedings{zheng2021incorporating,
  title={Incorporating prior financial domain knowledge into neural networks for implied volatility surface prediction},
  author={Zheng, Yu and Yang, Yongxin and Chen, Bowei},
  booktitle={Proceedings of the 27th ACM SIGKDD Conference on Knowledge Discovery \& Data Mining},
  pages={3968--3975},
  year={2021}
}

@article{mingone2022no,
  title={No arbitrage global parametrization for the eSSVI volatility surface},
  author={Mingone, Arianna},
  journal={Quantitative Finance},
  volume={22},
  number={12},
  pages={2205--2217},
  year={2022},
  publisher={Taylor \& Francis}
}

@inproceedings{yang2025hyperiv,
  title={HyperIV: Real-time implied volatility smoothing},
  author={Yang, Yongxin and Chen, Wenqi and Shu, Chao and Hospedales, Timothy},
  booktitle={The 42nd International Conference on Machine Learning},
  pages={1--15},
  year={2025}
}

@article{bergeron2022variational,
  title={Variational Autoencoders: A hands-Off approach to volatility},
  author={Bergeron, Maxime and Fung, Nicholas and Hull, John and Poulos, Zissis and Veneris, Andreas},
  journal={The Journal of Financial Data Science},
  volume={4},
  number={2},
  pages={125--138},
  year={2022}
}

@article{vuletic2024volgan,
  title={VolGAN: A generative model for arbitrage-free implied volatility surfaces},
  author={Vuleti{\'c}, Milena and Cont, Rama},
  journal={Applied Mathematical Finance},
  volume={31},
  number={4},
  pages={203--238},
  year={2024},
  publisher={Taylor \& Francis}
}

@article{azzone2025explicit,
  title={Explicit option pricing with additive processes},
  author={Azzone, Michele and Torricelli, Lorenzo},
  journal={SIAM Journal on Financial Mathematics},
  volume={16},
  number={3},
  pages={747--802},
  year={2025},
  publisher={SIAM}
}

@inproceedings{lin2024neural,
  title={Neural Term Structure of Additive Process for Option Pricing},
  author={Lin, Jimin and Liu, Guixin},
  booktitle={Proceedings of the 5th ACM International Conference on AI in Finance},
  pages={695--702},
  year={2024}
}

@article{dotsis2025option,
  title={Option pricing before Black, Scholes and Merton: A review and assessment of the historical evidence},
  author={Dotsis, George},
  journal={SSRN 5355859},
  year={2025}
}

\appendix
\section{Appendix}
\begin{proof}[Proof of Proposition~\ref{prop:psi_omega}]
This follows by straightforward calculus. For compactness we omit $(\tau, \kappa)$. First calculate the total vega by
\begin{align}
\pd_\omega p_{BS}^\omega 
    &= e^\kappa \frac{\pd \Phi(z_+^\omega)}{\pd z_+^\omega}\frac{\pd z_+^\omega}{\pd \omega} - \frac{\pd \Phi(z_-^\omega)}{\pd z_-^\omega}\frac{\pd z_-^\omega}{\pd \omega} \\
    &= e^\kappa \phi(z_+^\omega) \frac{\pd z_+^\omega}{\pd \omega} -\phi(z_-^\omega)\frac{\pd z_-^\omega}{\pd \omega} \\
    &= \phi(z_-^\omega) \frac{\pd (z_+^\omega - z_-^\omega)}{\pd \omega} \\
    &= \phi(z_-^\omega),
\end{align}
where $e^\kappa \phi(z_+^\omega) = \phi(z_-^\omega)$ is used in the third line. Thus the calendar spread is given by
\begin{align}
\pd_\tau p = \pd_\tau p_{BS}^\omega = \pd_\omega p_{BS}^\omega \pd_\tau \omega = \phi(z_-^\omega) \pd_\tau \omega.
\end{align}
The implied CDF is calculated by
\begin{align}
\Psi
    &= e^{-\kappa} \pd_\kappa p \\
    &= e^{-\kappa} \pd_\kappa p_{BS}^\omega \\
    &= e^{-\kappa} \left( e^\kappa \Phi(z_+^\omega) + \pd_\omega p_{BS}^\omega \pd_\kappa \omega\right) \\
    &= \Phi(z_+^\omega) + e^{-\kappa}\phi(z_-^\omega) \pd_\kappa \omega \\
    &= \Phi(z_+^\omega) + \phi(z_+^\omega) \pd_\kappa \omega.
\end{align}
The implied PDF is then computed as
\begin{align}
\psi
    &= \pd_\kappa \Psi \\
    &= \pd_z\Phi(z_+^\omega) \pd_\kappa z_+^\omega + \pd_z \phi(z_+^\omega) \pd_\kappa z_+^\omega \pd_\kappa \omega + \phi(z_+^\omega) \pd_{\kappa \kappa} \omega \\
    &= \phi(z_+^\omega) \pd_\kappa z_+^\omega - z_+^\omega \phi(z_+^\omega)\pd_\kappa z_+^\omega \pd_\kappa \omega + \phi(z_+^\omega) \pd_{\kappa \kappa} \omega\\
    &= \phi(z_+^\omega) \left( \pd_\kappa z_+^\omega (1 - z_+^\omega \pd_\kappa\omega) + \pd_{\kappa \kappa} \omega\right) \\
    &= \phi(z_+^\omega) \left( \frac{1}{\omega}(1 - z_-^\omega \pd_\kappa \omega) (1 - z_+^\omega \pd_\kappa\omega) + \pd_{\kappa \kappa} \omega\right) \\
    &= \frac{\phi(z_+^\omega)}{\omega} \left( 1 - (z_+^\omega  + z_-^\omega) \pd_\kappa \omega + z_-^\omega z_+^\omega (\pd_\kappa\omega)^2 + \omega \pd_{\kappa \kappa} \omega\right) \\
    &= \frac{\phi(z_+^\omega)}{\omega} \left( 1 - \frac{2 \kappa}{\omega} \pd_\kappa \omega + \left(\frac{\kappa^2}{\omega^2} - \frac{1}{4}\omega^2\right) (\pd_\kappa\omega)^2 + \omega \pd_{\kappa \kappa} \omega\right) \\
    &= \frac{\phi(z_+^\omega)}{\omega} \left( \left(1 - \frac{\kappa}{\omega} \pd_\kappa \omega \right)^2  - \frac{1}{4}(\omega^2 \pd_\kappa\omega)^2 + \omega \pd_{\kappa \kappa} \omega\right),
\end{align}
where $\pd_z\phi(z)= -z \phi(z)$ is used in the third line and $\pd_\kappa z_+^\omega = \frac{1}{\omega} - (\frac{\kappa}{\omega^2} - \frac{1}{2}) \pd_\kappa \omega = \frac{1}{\omega}(1 - z_-^\omega \pd_\kappa \omega)$ is used in the fifth line. Substitute $\psi_{BS}^\omega$, $\wtilde{\psi}_{BS}^\omega$, $\Psi_{BS}^\omega$ into the above equations and the result follows.

\end{proof}

\end{document}